\documentclass[12pt,a4paper]{article}

 \textwidth 17 cm 
 \textheight 24 cm 

\oddsidemargin -.5 cm 


 \topmargin -1 cm



\usepackage{cite}
\usepackage{amsmath}
\usepackage{amsthm}
\usepackage{amssymb}
\usepackage[utf8]{inputenc}
\usepackage{tikz}
\usepackage{subcaption}

\newtheorem{theorem}{Theorem}
\newtheorem{lemma}[theorem]{Lemma}


\def\defn#1{{\bf #1}}
\def\multi#1{{\begin{tabular}{@{}l@{}}#1\end{tabular}}}

\def\Real{{\mathbb R}}

\def\Norm#1{\left\|#1\right\|}
\def\innerprod(#1,#2){{\left<#1\,,\,#2\right>}}
\def\Set#1{{\left\{#1\right\}}}
\def\qquadtext#1{\qquad\textup{#1}\qquad}
\def\qquadand{\qquadtext{and}}
\def\quadtext#1{\quad\textup{#1}\quad}
\def\quadand{\quadtext{and}}
\def\pfrac#1#2{\frac{\partial #1}{\partial #2}}

\def\pqfrac#1#2#3{\frac{\partial^2 #1}{\partial {#2}\partial{#3}}}
\def\pqqfrac#1#2#3#4{\frac{\partial^3 #1}
  {\partial {#2}\partial{#3}\partial{#4}}}
\def\dfrac#1#2{\frac{d #1}{d #2}}

\usepackage{fancyhdr}
\fancyhf{}
\fancypagestyle{firststyle}
{
   \fancyhf{}
   
\fancyfoot[C]{1}
\fancyfoot[R]{\footnotesize\bf File: \jobname.tex}
}
\pagestyle{plain}

\newenvironment{jgitemize}{\begin{list}{$\bullet$}
{
\setlength{\topsep}{0.3em}
\setlength{\parsep}{0em}
\setlength{\itemsep}{0.3em}
\setlength{\itemindent}{2em}
\setlength{\leftmargin}{1em}
}}{\end{list}}

\def\mono{{\textup{M}}}
\def\dip{{\textup{D}}}
\def\edp{{\textup{ED}}}
\def\qp{{\textup{Q}}}
\def\eqp{{\textup{EQ}}}

\def\J{{\cal J}}
\def\Jmp{{\cal J}_{\mono}}
\def\Jdp{{\cal J}_{\dip}}
\def\Jed{{\cal J}_{\edp}}
\def\Jqp{{\cal J}_{\qp}}
\def\Jeq{{\cal J}_{\eqp}}

\def\JCmp{{J}_{\mono}}
\def\JCdp{{J}_{\dip}}

\def\JCqp{{J}_{\qp}}

\def\Cdot{\dot{C}}
\def\DomC{{{\cal I}}}
\def\DomChat{{\hat{\cal I}}}
\def\VE{{\boldsymbol E}}
\def\VD{{\boldsymbol D}}
\def\VB{{\boldsymbol B}}
\def\VH{{\boldsymbol H}}

\def\gamhat{\hat{\gamma}}
\def\xhat{\hat{x}}
\def\tauhat{\hat\tau}
\def\phihat{\hat\phi}

\def\Jcurr{{\zeta}}

\def\poleset#1#2{\left\{\text{
{\begin{tabular}{@{\!}c@{}}#1\\#2\end{tabular}}}\right\}}

\def\Vx{{\boldsymbol x}}
\def\VJ{{\boldsymbol J}}
\def\Vp{{\boldsymbol p}}
\def\VT{{\boldsymbol T}}
\def\VA{{\boldsymbol A}}
\def\Vzero{{\boldsymbol 0}}
\def\mdp{{\textup{MD}}}
\def\mqp{{\textup{MQ}}}
\def\tqp{{\textup{Tor}}}


\title{The correct and unusual coordinate transformation rules for electromagnetic
  quadrupoles} 
\author{Jonathan  Gratus\thanks{j.gratus@lancaster.ac.uk}\ \ and Thomas Banaszek
\\
Physics Department, Lancaster University, Lancaster LA1
  4YB, UK.
\\
Cockcroft Institute, Keckwick Lane, Daresbury, WA4 4AD, UK.}

\begin{document}

\maketitle

\begin{abstract}
Despite being studied for over a century, the use of quadrupoles have been
limited to Cartesian coordinates in flat spacetime due to the incorrect
transformation rules used to define them.  Here the correct transformation rules
are derived, which are particularly unusual as they involve second derivatives
of the coordinate transformation and an integral. Transformations involving
integrals have not been seen before. This is significantly different from the
familiar transformation rules for a dipole, where the components transform as
tensors.  It enables quadrupoles to be correctly defined in general relativity
and to prescribe the equations of motion for a quadrupole in a coordinate system
adapted to its motion and then transform them to the laboratory coordinates.  An
example is given of another unusual feature: a quadrupole which is free of
dipole terms in polar coordinates has dipole terms in Cartesian coordinates.  It
is shown that dipoles, electric dipoles, quadrupoles and electric quadrupoles
can be defined without reference to a metric and in a coordinates free
manner. This is particularly useful given their complicated coordinate
transformation.
\end{abstract}

\def\figquadflowrectified{a}
\def\figquadflowLab{b}


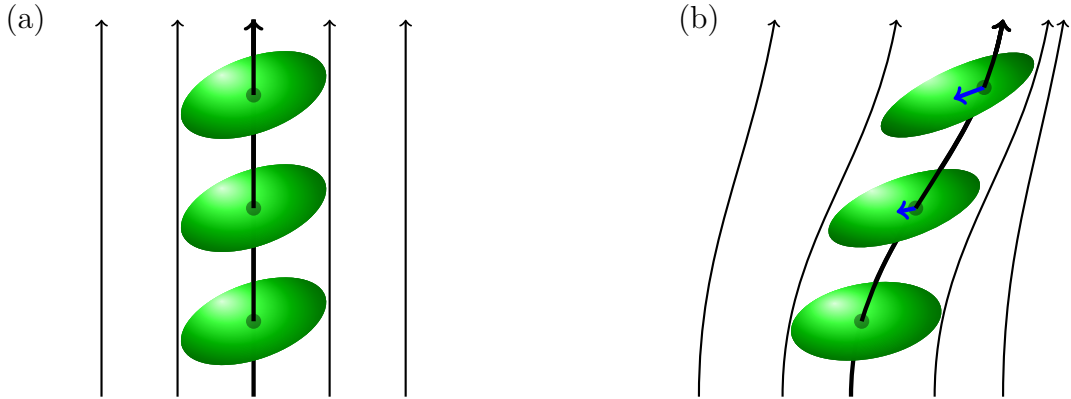
\begin{figure}[t!]
\centering
\begin{subfigure}[t]{0.45\textwidth}
\centering
\begin{tikzpicture}
\draw (-1,5) node {(\figquadflowrectified)} ;
\draw[thick,->] (0,0) -- +(0,5) ;
\draw[thick,->] (1,0) -- +(0,5) ;
\draw[ultra thick,->] (2,0) -- +(0,5) ;
\draw[thick,->] (3,0) -- +(0,5) ;
\draw[thick,->] (4,0) -- +(0,5) ; 
\shade[ball color=green,opacity=0.9] [shift={(2,1)}] [rotate=20,yscale=0.5] (0,0) circle (1) ;
\fill[opacity=0.4] [shift={(2,1)}] (0,0) circle (.1) ;
\draw[ultra thick] (2,1) -- (2,5) ;
\shade[ball color=green,opacity=0.9] [shift={(2,2.5)}] [rotate=20,yscale=0.5] (0,0) circle (1) ;
\fill[opacity=0.4] [shift={(2,2.5)}] (0,0) circle (.1) ;
\draw[ultra thick] (2,2.5) -- (2,5) ;
\shade[ball color=green,opacity=0.9] [shift={(2,4)}] [rotate=20,yscale=0.5] (0,0) circle (1) ;
\fill[opacity=0.4] [shift={(2,4)}] (0,0) circle (.1) ;
\draw[ultra thick] (2,4) -- (2,5) ;
\end{tikzpicture}
\end{subfigure}
\hspace{2em}
\begin{subfigure}[t]{0.45\textwidth}
\centering
\begin{tikzpicture}
\draw (0,5) node {(\figquadflowLab)} ;
\draw[thick,->] (0,0) to [out=90,in=-100] +(1,5) ;
\draw[thick,->] (1.1,0) to [out=90,in=-100] +(1.5,5) ;
\draw[ultra thick,->] (2,0) to [out=90,in=-100] +(2,5)  ;
\draw[thick,->] (3.1,0) to [out=90,in=-100] +(1.5,5)  ;
\draw[thick,->] (4,0) to [out=90,in=-100] +(0.8,5)  ;
\shade[ball color=green,opacity=0.9] [shift={(2.2,1)}] [rotate=10,yscale=0.5] (0,0) circle (1) ;
\fill[opacity=0.4] [shift={(2.14,1)}] (0,0) circle (.1) ;
\draw[ultra thick,->] (2.14,1) to [out=73,in=-100] (4,5)  ;
\shade[ball color=green,opacity=0.9] [shift={(2.7,2.5)}]
      [rotate=20,yscale=0.4,xscale=1.05] (0,0) circle (1) ;
\fill[opacity=0.4] [shift={(2.85,2.5)}] (0,0) circle (.1) ;
\draw [ultra thick,blue,->] [shift={(2.85,2.5)}] (0,0) -- +(-0.25,-0.05) ;
\draw[ultra thick,->] (2.85,2.5) to [out=58,in=-100] (4,5)  ;
\shade[ball color=green,opacity=0.9] [shift={(3.4,4)}] 
      [rotate=25,yscale=0.35,xscale=1.1] (0,0) circle (1) ;
\fill[opacity=0.4] [shift={(3.75,4.1)}] (0,0) circle (.1) ;
\draw [ultra thick,blue,->] [shift={(3.75,4.1)}] (0,0) -- +(-0.4,-0.15) ;
\draw[ultra thick,->] (3.75,4.1) to [out=70,in=-100] (4,5)  ;
\end{tikzpicture}
\end{subfigure}
\caption{Color: Flow of a quadrupole in coordinates adapted to the
  flow (a) and laboratory
  coordinates (b). Observe
  the appearance of a dipole (arrows) in the laboratory coordinate.
Here quadrupoles are represented by
  ellipsoids, and dipoles by arrows. Clearly the equations of
  motion are far simpler in the adapted coordinate system.}
\label{fig_quadflow}
\end{figure}

\begin{table*}[t!]
\begin{equation*}
\begin{array}{|l|c|c|c|}
\hline
\text{Multipole} & \text{charge distribution} & 
\text{current distribution} 
& \text{\footnotesize\multi{Number\\ of\\ components}}
\\
\hline\rule[-.4em]{0em}{1.4em}
\text{\multi{Electric\\Monopole:}}
& \rho_\mono=q\,\delta(\Vx)
& \VJ_\mono=\Vzero 
& 1
\\
\hline\hline\rule[-.6em]{0em}{1.6em}
\text{\multi{Electric\\dipole:}}\qquad
& \rho_\edp=\Vp_\edp\cdot\nabla\delta(\Vx)
& \VJ_\edp=\Vzero 
& 3
\\
\hline\rule[-.6em]{0em}{1.6em}
\text{\multi{Magnetic \\dipole:}}
& \rho_\mdp=0
& \VJ_\mdp=\Vp_\mdp\times\nabla\delta(\Vx)
& 3
\\
\hline\hline\rule[-1.2em]{0em}{3.em}
\text{\multi{Electric\\quadrupole:}}
& \rho_\eqp=\gamma^{0\mu\nu}\,
\displaystyle\pqfrac{\delta}{x^\mu}{x^\nu}
& \VJ_\eqp=\Vzero 
& 6
\\
\hline\rule[-1.2em]{0em}{3.em}
\text{\multi{Magnetic \\quadrupole:}}
& \rho_\mqp=0
& J^\mu_\mqp=\gamma^{\mu\nu\sigma}\,
\displaystyle\pqfrac{\delta}{x^\nu}{x^\sigma}
& 8
\\\hline
\end{array}
\end{equation*}
\begin{equation*}
\begin{array}{|l|c|c|c|}
\hline
\text{Multipole} & \text{\multi{Electric\\Potential}}
& \text{\multi{Magnetic\\Potential}}
& \text{\multi{Falloff of potentials\\as $r\!\to\!\infty$}}
\\
\hline\rule[-.4em]{0em}{1.4em}
\text{\multi{Electric\\Monopole:}}
& \phi_\mono=\frac{q}{4\pi\epsilon_0 r}
& \VA_\mono=\Vzero
& \sim r^{-1}
\\
\hline\hline\rule[-.6em]{0em}{1.6em}
\text{\multi{Electric\\dipole:}}\qquad
& \phi_\edp=\Vp_\edp\cdot\nabla\phi_\mono
& \VA_\edp=\Vzero
& \sim r^{-2}
\\
\hline\rule[-.6em]{0em}{1.6em}
\text{\multi{Magnetic \\dipole:}}
& \phi_\mdp=0
& \VA_\mdp=\Vp_\mdp\times\nabla\phi_\mono
& \sim r^{-2}
\\
\hline\hline\rule[-1.2em]{0em}{3.em}
\text{\multi{Electric\\quadrupole:}}
& \phi_\eqp=\gamma^{0\mu\nu}\,
\displaystyle\pqfrac{\phi_\mono}{x^\mu}{x^\nu}
& \VA_\eqp=\Vzero
& \sim r^{-3}
\\
\hline\rule[-1.2em]{0em}{3.em}
\text{\multi{Magnetic \\quadrupole:}}
& \phi_\eqp=0
& \VA^\mu_\eqp=\gamma^{\mu\nu\sigma}\,
\displaystyle\pqfrac{\phi_\mono}{x^\nu}{x^\sigma}
& \sim r^{-3}
\\\hline
\end{array}
\end{equation*}
\caption{Sources of static electric and magnetic dipoles and
  quadrupoles at the origin and their corresponding potential
  fields. Here $r=\Norm{\Vx}$ and the components $\gamma^{abc}$
  satisfy the symmetry condition (\ref{Intro_QP_gamma_sym}). Here
  $a=0,1,2,3$ and $\mu=1,2,3$. Even in the
static case we can see that there are three electric dipoles, three
magnetic dipoles, six electric quadrupoles and eight magnetic
quadrupoles. }
\label{tab_Intro_Static_Poles}
\end{table*}

\section{Introduction}
\label{ch_Intro}

Multipole expansions are used extensively as an approximation of
extended particles where the mass or charge is considered to be
concentrated at one point. Multipoles have been used in classical
electrodynamics
\cite{jackson1975electrodynamics,ellis1963fields,ellis1966electromagnetic},
quantum mechanics \cite{raab2005multipole}, as a model for
polarisation and magnetisation
\cite{jackson1975electrodynamics, graham1992multipole,
  de2006surprises, raab2005transformed, DEGROOT19651713, Ahmedov2002},
in determining the stucture of molecules in chemistry 
\cite{mitxelena2016molecular,piris2006new,maroulis2018electric,romero2017calculations}
and recently in the idea of meta-atoms
\cite{papasimakis2016electromagnetic}. A distribution of charge can be
approximated by a point charge together with a sum of moments \cite{van1991singular}. The
first correction is called the dipole, the second order correction a
quadrupole and so on.  Three of the magnetic quadrupoles are identified as
toroidal moments
\cite{dubovik1990toroid,heras1998electric,papasimakis2016electromagnetic}
and controversially called toroidal dipoles.
The sources and potentials for electric
and magnetic multipoles at rest (in flat space) are given in table
\ref{tab_Intro_Static_Poles}. (See also \cite{skinner1989introduction})

What are the correct equations of motion for a quadrupole and higher
order moments?  Although the equations of motion for the force and
torque on a dipole are well established
\cite{jackson1975electrodynamics}, the equivalents for quadrupoles is
much less clear. One method is to consider that quadrupoles evolve due
to a flow, as depicted in figure \ref{fig_quadflow}. The easiest
method for analysing this quadrupole motion and evolution is to choose
coordinate systems adapted to the flow, i.e. rectify the flow. In this
coordinate system the quadrupole simply progresses unchanged as in
figure \ref{fig_quadflow}(\figquadflowrectified).  One then needs to
transform this equation into the laboratory coordinate system. For
example to construct the equivalent of the Liénard-Wiechart fields,
\cite{ellis1966electromagnetic}. See figure
\ref{fig_quadflow}(\figquadflowLab).  However to do this one needs the
correct coordinate transformation rules. {\em The primary goal of this
  article is to establish the correct coordinate
  transformations}. This is important not only for transforming
between an adapted and laboratory coordinate systems, but also between
the spherical polars and Cartesian coordinates in flat space and also
between the arbitrary coordinate systems in general relativity.

The components of a dipole, transform in the familiar
way as tensors. That is, using the Jacobian matrix, the entries of
which are the partial
derivatives of the coordinate transformation. One may naturally assume
that the components of a quadrupole transform in a similar
manner. Indeed this is the case if one limits oneself to Lorentz boosts
and rotations in flat Minkowski space. Furthermore in such cases a
pure quadrupole, i.e. one that contains no dipole terms, would remain
a pure quadrupole in all coordinate systems. However with quadrupoles
we have the following unusual properties:
\begin{jgitemize}
\item[{\bf I}:]
The coordinate transformations of quadrupoles require {\em the
  derivatives of the Jacobian matrix and an integral}.  Although
second derivatives of the coordinate functions are familiar for
Christoffel symbols and jet bundles, those involving integrals have,
as far as the authors are aware, never been considered before.
\item[{\bf II}:]
There is no such thing as pure quadrupole. The coordinate 
transformation of a 
quadrupole moment will, in general, produce a dipole moment.
\end{jgitemize}
Since the correct coordinate transformations for quadrupoles have been
unknown up to now, the use of multipole expansions has been limited to
Cartesian coordinates in flat space. This work, therefore will greatly
expand the role of quadrupoles, so that they can be used to model
extended charges in arbitrary coordinate systems and in arbitrary
spacetimes. The tools developed in this article will enable
researches to extend the results to higher order multipoles.

With regards to point {\bf II} above, we give an example of a
quadrupole which in polar coordinates has no dipole terms, whereas in
Cartesian coordinates does have dipole terms. As suggested in figure
\ref{fig_quadflow}, the quadrupole which flows unchanged in the the
adapted coordinates gains a dipole in the laboratory frame. Although
such a fluid flow is uncommon in electromagnetism, it is the natural
Vlasov description for the dynamics of a distribution of charge in
seven dimensional phase-space-time. The extension of the coordinate
transformations to seven dimensions can be easily handled using the
coordinate free approach detailed in this article.  In the adapted
coordinate system, one can also add additional forces, modelling
internal collisions or self forces.

As well as the equation of motion for multipoles one may ask what are
the electromagnetic fields for generated by them.
The electromagnetic fields due to an arbitrary moving dipole, in flat
Minkowski spacetime, were
first calculated by Ellis \cite{ellis1963fields}, and have been
re-derived and re-expressed many times since
\cite{ward1965electromagnetic, monaghan1968heaviside,
  heras1994radiation, power2001electromagnetic}. The fields due to an
arbitrary moving quadrupole or higher multipole were also derived by
Ellis \cite{ellis1966electromagnetic}. He derived the
electromagnetic fields due to an arbitrary moving multipole in
Minkowski spacetime by differentiating the Liénard-Wiechart
fields. As stated these results require that the quadrupole is
expressed in Cartesian coordinates.
However quadrupoles have been much less examined in the
literature. 


Is it is common to separate out dipoles into three electric and three
magnetic dipoles. There is a choice however as to how to do this, one
can separate them out with respect to the rest frame of the dipole
\cite{ellis1963fields} or with respect to a laboratory frame
\cite{monaghan1968heaviside}. Since the components of a dipole
transform covariantly these are easy to perform.  One can also
separate out the quadrupoles with respect to the rest frame or the
laboratory frame. The complicated transformation rules however
mean that these will mix with the dipoles under change of
coordinates. Thus the question of what is an electric or magnetic
quadrupole in an arbitrary spacetime is more subtle. We 
show that, with respect to the particle rest frame the electric
quadrupole is well defined.  As a result, multipoles form a natural hierarchy
\begin{equation}
\begin{aligned}
&\poleset{Electric dipoles}{(Dim=3)}
\text{\LARGE{$\subset$}}
\poleset{All dipoles}{(Dim=6)}
\text{\LARGE{$\subset$}}
\poleset{Electric quadrupoles}{(Dim=12)}
\text{\LARGE{$\subset$}}
\poleset{All quadrupoles}{(Dim=20)}
\end{aligned}
\label{Intro_sets}
\end{equation}

\vspace{1em}

There is considerable interest in which aspect of electrodynamics can
be defined without the use of a metric and hence without gravity
\cite{hehl2012foundations}. If one relaxes the requirement that $\tau$
be proper time, then we see that monopoles,
dipoles and quadrupoles do not require a metric for their
definitions. Indeed even electric dipoles and electric quadrupoles can
be defined without reference to a metric. By contrast the magnetic
multipoles require either a metric or a preferred coordinate system to
define them. The advantage of such definitions are many fold:
\begin{jgitemize}
\item
In general
relativity, the stress-energy-momentum tensor can be derived by a
variation of the metric in the Lagrangian. Knowing that multipoles are
metric-free objects makes the variation much simpler. 
\item
The definitions given mean that the concept of multipoles and electric
multipoles can be generalised not only to higher dimensional
spacetimes but also to manifolds such as phase space or contact
manifolds where there is no preferred metric. In particular one can
talk about multipole expansions of plasmas and beams of particles,
where one takes moments of a probability distribution function in
phase space.
\end{jgitemize}

When dealing with physical objects in arbitrary spacetimes, one has the
choice either to define them with respect to a coordinate system and
then give the coordinate transformations or to define them in a
coordinate-free manner. Thus it is perfectly acceptable to define
quadrupoles using coordinates.
However such complicated transformations rules strongly promotes the
coordinate-free definition of quadrupoles. In this article we give
such a coordinate-free definition.

\vspace{1em}

This article is arranged as follows:

In section \ref{ch_IQuad} we present quadrupoles in the standard
notation using coordinates and an integral over the worldline.

In section \ref{ch_Coord} we derive the general coordinate
transformation for quadrupoles. We also show which quadrupoles are in
fact dipoles and which quadrupoles are electric dipoles.

In section \ref{ch_CoFree} we demonstrate a more abstract property of
dipoles and quadrupoles, that is that they can be defined without
reference either to a coordinate system or to a metric.

Finally in section \ref{ch_Conc} we conclude with some discussion and
suggestion of future research. In the appendix we prove some of the
the more technical statement from section \ref{ch_CoFree}.

\section{The standard representation of Quadrupoles}
\label{ch_IQuad}

In this article the Greek indices
$\mu,\nu,\sigma=1,2,3$ and the Latin indices $a,b,c=0,1,2,3$. We use
the summation convention with implicit summation over pairs of
matching high and low indices, unless otherwise stated.

The static electric and magnetic dipoles, table \ref{tab_Intro_Static_Poles},
can be combined into a single two component antisymmetric tensor
$\gamma^{ab}$, 
\begin{align}
\gamma^{ab}+\gamma^{ba}=0
\label{Intro_DP_gamma_sym}
\end{align}
where
\begin{align}
\gamma^{0\mu}=\Vp_\edp^\mu
\qquadand
\gamma^{\mu\nu}=\epsilon^{\mu\nu\sigma}\,(\Vp_\mdp)_\sigma
\label{Intro_gamma_ab_Pdp}
\end{align}

The quadrupoles components in table \ref{tab_Intro_Static_Poles} can
be combined into a single three component object $\gamma^{abc}$.
Due to conservation of charge, 
these satisfy the symmetry conditions:
\begin{align}
\gamma^{abc}=\gamma^{acb}
\qquadand
\gamma^{abc}+\gamma^{bca}+\gamma^{cab}
=0
\label{Intro_QP_gamma_sym}
\end{align}
The symmetry conditions (\ref{Intro_QP_gamma_sym}) give eight quadrupole
components. These may be written
$\gamma^{\mu\nu\nu}$
for $\mu\ne\nu$
(no sum) which give six components
and the pair
($\gamma^{123}$, $\gamma^{231}$).
So that from (\ref{Intro_QP_gamma_sym})
$\gamma^{\nu\mu\nu}=-\tfrac12\gamma^{\mu\nu\nu}$ and
$\gamma^{312}=\gamma^{123}+\gamma^{231}$.

Toroidal ``dipoles'' are given by
\begin{align}
\gamma^{\mu\nu\sigma}_\tqp 
= 
T^{\nu}\,\delta^{\mu\sigma} + T^{\sigma}\,\delta^{\mu\nu}
-2 T^{\sigma}\,\delta^{\nu\sigma}
\label{Intro_Tor_quads}
\end{align}
Substituting (\ref{Intro_Tor_quads}) into $\VJ_\mqp$ (table
\ref{tab_Intro_Static_Poles}) gives
$\VJ_\tqp=\nabla\times\nabla\times(\VT\delta)$. 
In our classification these are considered
quadrupole terms.
In \cite{dubovik1990toroid,heras1998electric,papasimakis2016electromagnetic}
these are
actually referred to as toroidal ``dipoles''. This despite the fact that 1:
they involve the second derivative and 2: their potential fields fall
off as $r^{-3}$. In addition they are not immune from the 
complicated transformations rules investigated in this article.


For moving multipoles the sources given in table
\ref{tab_Intro_Static_Poles} have to be integrated over the worldline.
In Minkowski spacetime, the electromagnetic fields due to a moving electric
charge are known as the Liénard-Wiechart fields. The source for the
Liénard-Wiechart is the 4-current $J^a(x)$ which may be written in
terms of the Dirac $\delta-$function \cite{jackson1975electrodynamics}
\begin{align}
\JCmp^a(x) = 
{q}\int_\DomC \Cdot^a(\tau)\,
\delta\big(x-C(\tau)\big)\, d\tau 
\label{Intro_Jmp_a_delta}
\end{align}
where $C^a(\tau)$ are the components of the worldline of the
particle of charge $q$, $\Cdot^a=\dfrac{C^a}{\tau}$ 
and $x$ is a point in spacetime. 
The parameter
$\tau$ is usually considered to be the proper time of the particle,
although it need not be, and
the interval
$\DomC\subset\Real$ is the range of $\tau$. The source
(\ref{Intro_Jmp_a_delta}) is valid for any spacetime, although in
general the corresponding electromagnetic fields have not been
calculated.

An arbitrary moving dipole in an arbitrary moving spacetime may be written
\begin{align}
\JCdp^a(x)
=
\int_\DomC
\gamma^{ab}(\tau)\, \pfrac{\delta}{x^b}\big(x-C(\tau)\big)\, d\tau
\label{Intro_Jdp_a_delta}
\end{align}
Multiple authors have found the electromagnetic field for a dipole in
Minkowski spacetime in terms of an integral of the retarded Green's
function.

Due to conservation of charge the parameters defining the dipole are
constrained to be antisymmetric (\ref{Intro_DP_gamma_sym}) giving six
components. However once (\ref{Intro_DP_gamma_sym}) is imposed the
components $\gamma^{ab}=\gamma^{ab}(\tau)$ may be arbitrary functions
of $\tau$.
It is easy to show that under a change of basis the
$\gamma^{ab}$ transforms as a tensor, given by
(\ref{Coord_DP_change_coords}) below. This is true both for global
linear transformations, in Minkowski spacetime, $x^a\to \xhat^a =
A^a_b x^b$ and for local coordinate transformations $x^a\to \xhat^a =
\hat{x}^a(x^0,\ldots,x^3)$. 

The generalisation of (\ref{Intro_Jdp_a_delta}) to quadrupoles is less
common. Kaufmann \cite{kaufman1962maxwell} was the first to express the
quadrupole as an expansion to the second derivative of the
$\delta-$function. 
Ellis \cite{ellis1963fields,ellis1966electromagnetic} 
observed that these can be written  
\begin{align}
\JCqp^a
=
\tfrac12\int_\DomC
\gamma^{abc}(\tau)\, \pqfrac{\delta}{x^b}{x^c}\big(x-C(\tau)\big)\, d\tau
\label{Intro_Jqp_a_delta}
\end{align}
He also calculated the electromagnetic fields for arbitrary moving
quadrupoles and higher order moments in Minkowski spacetime, in terms
of the integral of the retarded Green's function. 
The $\gamma^{abc}$ subject to (\ref{Intro_gamma_ab_Pdp}) give twenty
independent quadrupoles components. Again the
$\gamma^{abc}=\gamma^{abc}(\tau)$ may be arbitrary functions of
$\tau$. 

We observe that the quadrupoles
given in (\ref{Intro_Jqp_a_delta}) can also contain dipole terms. This
can be seen in the static case when $t=\tau=x^0$ and so
\begin{align*}
\int_\DomC
&\gamma^{ab0}(\tau)\, \pqfrac{\delta}{x^b}{x^0}\big(x-C(\tau)\big)\, d\tau
=
-\int_\DomC
\dfrac{\gamma^{ab0}}{\tau}\, \pfrac{\delta}{x^b}\big(x-C(\tau)\big)\, d\tau
\end{align*}
Thus the electromagnetic fields due to a general static quadrupole given in
(\ref{Intro_Jqp_a_delta}) will contain both terms that fall off as
distance cubed and terms that fall of as the fourth power of distance.
The twenty quadrupole components split into six dipole components
and fourteen dipolefree-quadrupole components.

Although for global linear transformations in Minkowski spacetime
$x^a\to \xhat^a = A^a_b\, x^b$ the components $\gamma^{abc}(\tau)$
transformation tensorially, this is not true for general coordinate
transformation. As stated, the rules for a general coordinate
transformation requires a second derivative of the coordinate
functions and an integral. These are given in
(\ref{Coord_QP_change_coords}-\ref{Coord_QP_def_P}) below.  Perhaps,
it is less surprising since (\ref{Intro_Jqp_a_delta}) contains a
second derivative and an integral. However it is unexpected when
contrasted with the dipole case, where the components transform as
tensors.  One problem with such a transformation is that it will give
rise to an arbitrary constant of integration. Fortunately this
constant does not effect the resulting quadrupole, as the terms are
subsequently differentiated. Likewise it will not effect the
corresponding electromagnetic fields in flat spacetime.

A quadrupole
which does not appear to contain any dipole terms, will in general, acquire
dipole terms when one performs a change of coordinates. This
contrasts with the monopole term (\ref{Intro_Jmp_a_delta}) which does
not mix with other multipole terms under change of coordinates. 
As a simple example consider a quadrupole at rest given in axial
cylindrical coordinates $(t,r,\theta,z)$ with $\gamma^{abc}=0$ except
$\gamma^{211}=-2\gamma^{121}=-2\gamma^{112}=2\kappa$ where
$\kappa\in\Real$, $\kappa\neq0$ is a constant. Since it contains no
components with a $0$ index one may expect that in Cartesian
coordinates $(t,x,y,z)$ with $x=r\cos\theta$ and $y=r\sin\theta$, it
would not contain any dipole terms. Indeed if one were to assume that
$\gamma^{abc}$ transforms tensorially then this would be the
case. However, with the correct transformation rules, given in
(\ref{Coord_QP_change_coords}-\ref{Coord_QP_def_P}) below, even in this
simple case gives rise to a dipole term. Writing $\gamhat^{ab}$ and
$\gamhat^{abc}$ for the (dipole and quadrupole) components with
respect to Cartesian coordinates, the dipole term is
$\gamhat^{12}=\kappa$. As stated this would give rise to an $r^{-2}$
fall off for the potential. Furthermore, when expressed in terms of
(\ref{Intro_Jqp_a_delta}) the component $\gamhat^{012}= \kappa t +
\kappa_0$ which grows indefinitely and contains an arbitrary constant
$\kappa_0$. Fortunately as stated above these are differentiated away.

As with the monopole and dipole terms, one
can define a quadrupole in an arbitrary spacetime, using
(\ref{Intro_Jqp_a_delta}) with 
$\gamma^{abc}$ subject to (\ref{Intro_QP_gamma_sym}). In this case the
lack of a preferred coordinate system means that one cannot separate
out the dipoles from the quadrupoles. Likewise since we are not in
Minkowski spacetime, one cannot in general, use the fall off of the
corresponding electromagnetic fields to distinguish the terms
either. Thus in this case there is no concept of a dipolefree-quadrupole.

\section{Coordinate Transformations of Quadrupoles}
\label{ch_Coord}

Since the multipoles involve Dirac
$\delta-$functions it is necessary to integrate them with test
functions in order to evaluate them. Recall, these test functions are
smooth and have compact support. That is, they are infinitely
differentiable and are non zero only on a bounded region of
spacetime. We write these test functions as
$(\phi_0,\ldots,\phi_3)$ which are components of a covector. Acting on
$\phi_a$ we have from
(\ref{Intro_Jmp_a_delta}-\ref{Intro_Jqp_a_delta}), 
it is easy to see that the monopole,
dipoles and quadrupoles give
\begin{align}
\int_M \JCmp^a \, \phi_a\, d^4x 
&= 
{q}\int_\DomC \Cdot^a(\tau)\,\phi_a|_{C(\tau)}
\,d\tau 
\label{Coord_Jmp_act_delta}
\\
\int_M \JCdp^a\, \phi_a\, d^4x 
&=
-\int_\DomC
\gamma^{ab}(\tau)\, \pfrac{\phi_a}{x^b}\Big|_{C(\tau)}\, d\tau
\label{Coord_Jdp_act_delta}
\\
\int_M \JCqp^a\, \phi_a\, d^4x 
&=
\tfrac12\int_\DomC
\gamma^{abc}(\tau)\, \pqfrac{\phi_a}{x^b}{x^c}\Big|_{C(\tau)}\, d\tau
\label{Coord_Jqp_act_delta}
\end{align}
where $M$ is spacetime and we have assumed that $\phi_a$ is only non
zero on the coordinate patch $(x^0,\ldots,x^3)$.
Given new coordinates $(\xhat^0,\ldots,\xhat^3)$ then we require that
$\JCmp^a$, $\JCdp^a$ and $\JCqp^a$ all transform as vectors. That is
\begin{align}
\hat{\JCmp^a} = \pfrac{\xhat^a}{x^b}\, \JCmp^b
\,,\quad
\hat{\JCdp^a} = \pfrac{\xhat^a}{x^b}\, \JCdp^b
\quadand
\hat{\JCqp^a} = \pfrac{\xhat^a}{x^b}\, \JCqp^b
\label{Coord_trans_J}
\end{align}
These transformation are automatic in
(\ref{Coord_Jmp_act_delta}-\ref{Coord_Jqp_act_delta}) since we assumed
$\phi_a$ transformed as a covector, i.e.
\begin{align}
\phihat_a = \pfrac{x^a}{\xhat^b} \phi_b
\label{Coord_trans_phi}
\end{align}
We also wish to consider allowing different parametrisation
$\tau\in\DomC$ and $\tauhat\in\DomChat$. For example one may be proper
time and the other lab time.

Thus relating the new and old coordinates then
(\ref{Coord_Jmp_act_delta}-\ref{Coord_Jqp_act_delta}) and
(\ref{Coord_trans_J}),(\ref{Coord_trans_phi})
imply
\begin{align}
&\int_\DomC \Cdot^a(\tau)\,\phi_a|_{C(\tau)}\, d\tau
=
\int_\DomChat \dot{\hat{C}}^a(\tauhat)\,\phihat_a|_{C(\tauhat)}\,
d\tauhat
\label{Coord_trans_Jmp_phi}
\\
&\int_\DomC
\gamma^{ab}(\tau)\, \pfrac{\phi_a}{x^b}\Big|_{C(\tau)}\, d\tau
=
\int_\DomChat
\gamhat^{ab}(\tauhat)\, \pfrac{\phihat_a}{\xhat^b}\Big|_{C(\tauhat)}\, d\tauhat
\label{Coord_trans_Jdp_phi}
\\
&\int_\DomC
\gamma^{abc}(\tau)\, \pqfrac{\phi_a}{x^b}{x^c}\Big|_{C(\tau)}\, d\tau
=
\int_\DomChat
\gamhat^{abc}(\tauhat)\, \pqfrac{\phihat_a}{x^c}{\xhat^b}\Big|_{C(\tauhat)}\, d\tauhat
\label{Coord_trans_Jqp_phi}
\end{align}
The charge $q$ associated with the monopole is invariant under
coordinate transformation, which follows from
(\ref{Coord_trans_Jmp_phi}). As stated in the introduction the coordinate
transformation of dipole components $\gamma^{ab}$ 
is tensorial, i.e.
\begin{align}
\gamhat^{ab}
=
\pfrac{\phihat_c}{\xhat^d}\,
\pfrac{\xhat^d}{x^b}\, \pfrac{\xhat^c}{x^a}\, \dfrac{\tau}{\tauhat} 
\gamma^{ab}
\label{Coord_DP_change_coords}
\end{align}
since
\begin{align*}
\int_\DomC
\pfrac{\phihat_a}{x^b}\, \gamhat^{ab}\, d\tauhat
&=
\int_\DomC
\pfrac{\phi_a}{x^b}\, \gamma^{ab}\, d\tau
=
\int_\DomC
\pfrac{\xhat^d}{x^b}\pfrac{}{\xhat^d}
\Big(\pfrac{\xhat^c}{x^a}\phihat_c\Big)\, \gamma^{ab}\, d\tau
\\&=
\int_\DomC
\Big(
\pfrac{\xhat^d}{x^b}\pfrac{\xhat^c}{x^a}\pfrac{\phihat_c}{\xhat^d}
+
\pqfrac{\xhat^c}{x^a}{x^b}\phihat_c\Big)
\gamma^{ab}\, d\tau
\\&=
\int_\DomC
\pfrac{\xhat^d}{x^b}\pfrac{\xhat^c}{x^a}\pfrac{\phihat_c}{\xhat^d}
\gamma^{ab}\, d\tau
=
\int_\DomC
\pfrac{\phihat_c}{\xhat^d}\,
\pfrac{\xhat^d}{x^b}\, \pfrac{\xhat^c}{x^a}\, \dfrac{\tau}{\tauhat} 
\gamma^{ab}\, d\tauhat
\end{align*}

In contrast to the dipole, the coordinate transformation of the
components $\gamma^{abc}$ of the quadrupole is given by
\begin{align}
\gamhat^{def}
&= 
\dfrac{\tau}{\tauhat}
\Big(
A_{a}^{d}
A_{b}^{e}
A_{c}^{f}
\gamma^{abc}
+
P^{de}\,
\Cdot^f
+
P^{df}\,
\Cdot^e
\Big)
\label{Coord_QP_change_coords}
\end{align}
where
\begin{align}
A_{b}^{a} = \pfrac{\xhat^a}{x^b}\Big|_{C(\tau)}
\,,\quad
A_{bc}^{a} = \pqfrac{\xhat^a}{x^c\,}{ x^b}\Big|_{C(\tau)}
\,,\quad
\Cdot^e=\dfrac{C^e}{\tau}
\label{Coord_QP_def_A}
\end{align}
and
\begin{align}
P^{de}(\tau)
=
\!\!\int^{\tau} \!\!\!
\gamma^{abc}(\tau')\,\Big(\!A^d_c(\tau')\,A^e_{ab}(\tau')
-
A^e_c(\tau')\,A^d_{ab}(\tau')\!\Big)
\,d\tau'
\label{Coord_QP_def_P}
\end{align}
The proof of (\ref{Coord_QP_change_coords}) is as follows:
\begin{align*}
\tfrac12\int_\DomC
\pqfrac{\phi_a}{x^c}{x^b}\, \gamma^{abc}\, d\tau
&=
\tfrac12\int_\DomC
\pfrac{}{x^c}\bigg(\pfrac{}{x^b}\Big(\pfrac{\xhat^d}{x^a}\phihat_d\Big)\bigg)\, 
\gamma^{abc}\, d\tau
\\&=
\tfrac12\int_\DomC
\pfrac{}{x^c}\Big(
\pqfrac{\xhat^d}{x^a}{x^b}\phihat_d
+
\pfrac{\xhat^d}{x^a}\pfrac{\xhat^e}{x^b}\pfrac{\phihat_d}{\xhat^e}
\Big)\, 
\gamma^{abc}\, d\tau
\\&=
\tfrac12\int_\DomC
\Big(\pqqfrac{\xhat^d}{x^c}{x^a}{x^b}\phihat_d
+
\pfrac{\xhat^e}{x^c}\pqfrac{\xhat^d}{x^a}{x^b}
\pfrac{\phihat_d}{\xhat^e}
+
\pqfrac{\xhat^d}{x^a}{x^c}\pfrac{\xhat^e}{x^b}\pfrac{\phihat_d}{\xhat^e}
\\&\qquad\qquad
+
\pfrac{\xhat^d}{x^a}\pqfrac{\xhat^e}{x^b}{x^c}\pfrac{\phihat_d}{\xhat^e}
+
\pfrac{\xhat^d}{x^a}\pfrac{\xhat^e}{x^b}
\pfrac{\xhat^f}{x^c}
\pqfrac{\phihat_d}{\xhat^f}{\xhat^e}
\Big)\, 
\gamma^{abc}\, d\tau
\\&=
\tfrac12\int_\DomC
\Big(
\big(A^e_c\,A^d_{ab}
+
A^e_b\,A^d_{ac}
+
A^d_a\,A^e_{bc}\big)\pfrac{\phihat_d}{\xhat^e}
+
A^{d}_{a}\,A^{e}_{b}\,A^{f}_{c}\,
\pqfrac{\phihat_d}{\xhat^f}{\xhat^e}
\Big)\, 
\gamma^{abc}\, d\tau
\\&=
\tfrac12\int_\DomC
\Big(
\,S^{de}\,\pfrac{\phihat_d}{\xhat^e}
+
A^{d}_{a}\,A^{e}_{b}\,A^{f}_{c}\,\gamma^{abc}\, 
\pqfrac{\phihat_d}{\xhat^f}{\xhat^e}
\Big)\, 
d\tau
\end{align*}
where
\begin{align*}
S^{de}=\big(A^e_c\,A^d_{ab}
+
A^e_b\,A^d_{ac}
+
A^d_a\,A^e_{bc}\big)\gamma^{abc}\, 
\end{align*}
However
\begin{align*}
\big(A^e_c\,A^d_{ab}
+
A^e_b\,A^d_{ac}\big)\gamma^{abc}
&=
A^e_c\,A^d_{ab}\big(\gamma^{abc}+\gamma^{acb}\big)
\\&=
2A^e_c\,A^d_{ab}\gamma^{abc}
\end{align*}
and
\begin{align*}
A^d_a&\,A^e_{bc}\,\gamma^{abc}
=
-A^d_a\,A^e_{bc}(\gamma^{cab}+\gamma^{bca})
\\&=
-A^d_b\,A^e_{ca}\gamma^{abc}-A^d_c\,A^e_{ab}\gamma^{abc}
=
-2A^d_c\,A^e_{ab}\gamma^{abc}
\end{align*}
Hence
\begin{align*}
S^{de}=2\big(A^e_c\,A^d_{ab}
-
A^d_c\,A^e_{ab}\big)\, \gamma^{abc}
\end{align*}
so that $S^{de}+S^{ed}=0$. From (\ref{Coord_QP_def_P})
$\displaystyle S^{de}=-2\dfrac{P^{de}}{\tau}$ giving 
\begin{align*}
\tfrac12\int_\DomC
S^{de}\, \pfrac{\phihat_d}{\xhat^e}
\,d\tau
&=
-\int_\DomC
\dfrac{P^{de}}{\tau}
\pfrac{\phihat_d}{\xhat^e}
\,d\tau
=
\int_\DomC
P^{de}\dfrac{}{\tau}
\pfrac{\phihat_d}{\xhat^e}
\,d\tau
=
\int_\DomC
P^{de}\Cdot^f 
\pqfrac{\phihat_d}{\xhat^f}{\xhat^e}
\,d\tau
\\&=
\tfrac12\int_\DomC
\big(P^{de}\Cdot^f + P^{df}\Cdot^e\big)
\pqfrac{\phihat_d}{\xhat^f}{\xhat^e}
\,d\tau
\end{align*}
Thus
\begin{align*}
\tfrac12\int_\DomC
\pqfrac{\phihat_d}{x^e}{x^f}\, \gamhat^{def}\, d\tauhat
&=
\Jqp[\phi]
=
\tfrac12\int_\DomC
\Big(\!
P^{de}\Cdot^f + P^{df}\Cdot^e
+
A^{d}_{a}\,A^{e}_{b}\,A^{f}_{c}\,\gamma^{abc}\Big)
\pqfrac{\phihat_d}{\xhat^f}{\xhat^e}
\, d\tau
\\&=
\tfrac12\int_\DomC
\dfrac{\tau}{\tauhat}\Big(\!
P^{de}\Cdot^f \!+\! P^{df}\Cdot^e
\!+\!
A^{d}_{a}\,A^{e}_{b}\,A^{f}_{c}\,\gamma^{abc}\Big)
\pqfrac{\phihat_d}{\xhat^f}{\xhat^e}
\, d\tauhat
\end{align*}
which gives (\ref{Coord_QP_change_coords}).\qed

\vspace{1em}

As stated certain quadrupoles are in fact
dipoles. Consider the quadrupole given by (\ref{Coord_Jqp_act_delta})
with $\gamma^{abc}(\tau)$ given by
\begin{align}
\gamma^{abc}= p^{ab}\,\Cdot^c +p^{ac}\,\Cdot^b
\label{Coord_QP_DP_gam}
\end{align}
where $p^{ab}=p^{ab}(\tau)$ and $p^{ab}+p^{ba}=0$. Note that these
satisfy (\ref{Intro_QP_gamma_sym}). The quadrupole
$\JCqp^a$ is in fact a dipole $\JCdp^a$ where
\begin{align}
\gamma^{ab} = \dot{p}^{ab}
\label{Coord_QP_DP_p_int}
\end{align}
where $\dot{p}^{ab}={d p^{ab}}/{d \tau}$. This follows since substituting (\ref{Coord_Jqp_act_delta}) into
\eqref{Coord_QP_DP_gam} gives
\begin{align*}
\tfrac12\int_\DomC&
(p^{ab}\,\Cdot^c +p^{ac}\,\Cdot^b)\,\pqfrac{\phi_a}{x^c}{x^b}\,  d\tau
\\&=
\int_\DomC
p^{ab}\,\Cdot^c \,\pqfrac{\phi_a}{x^c}{x^b}\,  d\tau
=
\int_\DomC
p^{ab}\,\dfrac{}{\tau}\,\Big(\pfrac{\phi_a}{x^b}\Big|_{C(\tau)}\Big)\,  d\tau
\\&=
-\int_\DomC
\dot{p}^{ab}\,\Big(\pfrac{\phi_a}{x^b}\Big|_{C(\tau)}\Big)\,  d\tau
\end{align*}
Hence by comparing with (\ref{Coord_Jdp_act_delta}) we see that $\Jqp$
contains only a dipole term with (\ref{Coord_QP_DP_p_int}).\qed

\vspace{1em}

We can now demonstrate our example of the ``dipolefree-quadrupole'' in
axial cylindrical coordinates outlined in the introduction. Let
$x^a=(t,r,\theta,z)$ be axial cylindrical coordinates and
$\xhat^a=(t,x,y,z)$ be Cartesian coordinates, with transformation
functions $x=r\cos\theta$ and $y=r\sin\theta$. The transformation
rules (\ref{Coord_QP_def_A}) are given by
\begin{align*}
&A^a_b=\delta^a_b
\quadtext{except}
A^1_1 = \pfrac{x}{r} = \cos\theta,\ \,
A^2_1 = \pfrac{y}{r} = \sin\theta,\ \,
\\&
A^1_2 = \pfrac{x}{\theta} = -r\sin\theta,\quad
A^2_2 = \pfrac{y}{\theta} = r\cos\theta
\\
&\text{and}\quad A^a_{bc}=0
\quadtext{except}
A^1_{12} = A^1_{21} = -\sin\theta,\quad
\\&
A^2_{12} = A^2_{21} = \cos\theta,\quad
A^1_{22} = -r\cos\theta,\quad
A^2_{22} = -r\sin\theta
\end{align*}
Thus the integrated in (\ref{Coord_QP_def_P}) corresponding to
$P^{12}$, the only no zero dipole component, is given by
\begin{align*}
\gamma^{abc}&(\tau')\,\Big(A^1_c(\tau')\,A^2_{ab}(\tau')
-
A^2_c(\tau')\,A^1_{ab}(\tau')\Big)\, 
\\&=
\gamma^{112}(A^1_2\,A^2_{11}-A^2_2\,A^2_{11})
+
\gamma^{121}(A^1_1\,A^2_{12}-A^2_1\,A^2_{12})
+
\gamma^{211}(A^1_1\,A^2_{12}-A^2_1\,A^2_{12})
\\&= 
\kappa
\end{align*}
Hence $P^{12}=-P^{21}=\kappa t+\kappa_0$ where $\kappa_0$ is an
arbitrary constant of integration. Hence $\gamhat^{012}=\kappa
t+\kappa_0$. Using (\ref{Coord_QP_DP_gam}) and
(\ref{Coord_QP_DP_p_int}) we see that $P^{21}$ gives rise to a dipole
component with $\gamhat^{12}=\kappa$.

\vspace{1em}

With respect to a coordinate system, the dipole components
$\Set{\gamma^{01},\gamma^{02},\gamma^{03}}$ are electric and
  $\Set{\gamma^{12},\gamma^{13},\gamma^{23}}$ are magnetic.
In this article we only consider splitting the electric and magnetic
components with respect to the instantaneous rest frame of the particle.
In an arbitrary coordinate system, the electric dipole $\Jed$ may be written 
\begin{align}
\gamma^{ab}=w^a \Cdot^b - w^b \Cdot^a
\label{Coord_DP_ED}
\end{align}
where $w^a(\tau)$ transforms as a vector.  Note that replacing
$w^a(\tau)$ with $w^a(\tau)+\xi(\tau)\Cdot^a(\tau)$, for any scalar
$\xi(\tau)$, does not change $\Jed^a$. 

In spacetime there is a preferred rest coordinate system, called the
Fermi coordinates,  about a worldline. 
For quadrupole we say that electric dipoles 
are those which in the Fermi coordinate system have components with a zero,
i.e. $\gamma^{0ab},\gamma^{a0b},\gamma^{ab0}$. From
(\ref{Coord_QP_DP_gam}) we see that these contain all the
dipoles. There are six dipolefree electric quadrupoles. Likewise
the magnetic eight magnetic quadrupole components 
contain only $\gamma^{\mu\nu\rho}$, where
Greek indices run over $\mu,\nu,\rho=1,2,3$,.

It turns out that identifying the electric quadrupoles in an arbitrary
coordinate system is easy. These become
\begin{align}
\gamma^{abc}
=
\Cdot^a\,q^{bc} + \Cdot^a\, q^{cb} - \Cdot^b\, q^{ac} - \Cdot^c\,
q^{ab}
\label{Coord_EQP}
\end{align}
The $q^{ab}(\tau)$ have no restrictions, but the $\gamma^{abc}(\tau)$
is unchanged if we replace
$q^{ab}(\tau)=q^{ab}(\tau)+s^a(\tau)\Cdot^b(\tau)+s^b(\tau)\Cdot^c(\tau)$
for any indexed scalars $s^a(\tau)$.  The $\gamma^{abc}$ given by
(\ref{Coord_EQP}) satisfy the symmetry conditions
(\ref{Intro_QP_gamma_sym}).  This gives the twelve independent
electric quadrupole terms.  If $q^{ab}+q^{ba}=0$ then
(\ref{Coord_EQP}) reduces to (\ref{Coord_QP_DP_gam}), with
$p^{ab}=q^{ab}$.
Equation (\ref{Coord_EQP}) is proved in the appendix after we have
introduced the coordinate-free and metric-free definitions.

\section{Coordinate-free and metric-free definition of multipoles}
\label{ch_CoFree}

As stated in the introduction, in this section we introduce coordinate-free
and metric-free definitions of dipoles, quadrupoles, electric dipoles
and electric quadrupoles. This is because quadrupoles and electric
quadrupoles are much easier to define in a coordinate-free manner.
Let the set of all smooth $p-$form fields on
spacetime $M$ be
written $\Gamma\Lambda^pM$. A test form is a form
$\phi\in\Gamma\Lambda^pM$ with compact support. The set of all test
$p-$forms is written $\Gamma_0\Lambda^pM$.

Since the dipoles and quadrupoles are only non-zero along a worldline
one must use notion of distributions in order to define them. Recall
that the current $3-$form, $\J$, which includes dipoles and
quadrupoles, is the source of Maxwell's equations. In the language of
exterior differential forms, Maxwell's equations become
\begin{align}
dF=0\qquadand dH=\J
\label{Defs_Maxwell}
\end{align}
where $F\in\Gamma\Lambda^2M$ is the electromagnetic 2-form encoding
the electric fields $\VE$ and the magnetic flux density $\VB$, and
where $H\in\Gamma\Lambda^2M$ is the excitation 2-form encoding the
displacement field $\VD$ and the magnetic field intensity $\VH$. Here
$d$ is the exterior derivative. The fields $F$ and $H$ have to be
related by constitutive relations. The constitutive relations for the
vacuum are given by $H=\star F$, where $\star$ is the Hodge dual,
derived from the metric. These lead to the microscopic Maxwell
equations, $d\star F=\J$. Taking the exterior derivative of the second
equation in (\ref{Defs_Maxwell}) leads to the continuity equation
\begin{align}
d\J=0
\label{Defs_dJ}
\end{align}
which in turn leads to conservation of
charge. We say a $\J$ which satisfies (\ref{Defs_dJ}) is \defn{closed}.

Since Maxwell's equations are linear one can consider distributional
currents. Following Schwartz, 
we define a distribution what is does on a test $(4-p)-$form
$\phi\in\Gamma\Lambda^{4-p}M$. A
test $(4-p)-$form has compact support. If
$\alpha\in\Gamma\Lambda^pM$ is a smooth $p-$form, we can construct a
regular distribution $\alpha^D$ via
\begin{align}
\alpha^D[\phi]=\int_M \phi\wedge \alpha
\label{Defs_def_alpha_D}
\end{align}
The definition of the wedge product, Lie derivatives, internal
contraction and exterior derivatives on distributions are defined to
be consistent with (\ref{Defs_def_alpha_D}). Thus for a distribution
$\Psi$ we set
\begin{align}
\begin{gathered}
(\Psi_1+\Psi_2)[\phi]
=
\Psi_1[\phi]+\Psi_2[\phi]
\,,\quad
(\beta\wedge\Psi)[\phi]=\Psi[\phi\wedge\beta]
\,,\quad
(d\Psi)[\phi]=(-1)^{(3-p)}\Psi[d\phi]
\,,
\\
(i_v\Psi)[\phi]=(-1)^{(3-p)}\Psi[i_v\phi]
\quadand
(L_v\Psi)[\phi]=-\Psi[L_v\phi]
\label{Defs_opps_on_Psi}
\end{gathered}
\end{align}
Thus for $\J$ to be closed requires 
\begin{align}
\J[d\lambda]=0
\label{Defs_J_closed}
\end{align}
for all test forms
$\lambda\in\Gamma_0\Lambda^0M$.

The monopole current $\Jmp$ is
defined in terms of the worldline $C:\DomC\to M$ where
$\DomC\subset\Real$ is the domain of the parameter $\tau$,
\begin{align}
\Jmp[\phi]=q\int_\DomC C^\star(\phi)
\label{Defs_Jmp}
\end{align}
where $C^\star:\Gamma_0\Lambda^1 M\to\Gamma_0\Lambda^1\DomC$ is the pullback.
Conservation of charge $d\Jmp=0$ implies $q$ is constant. In a
coordinate system this becomes (\ref{Intro_Jmp_a_delta}) and 
(\ref{Coord_Jmp_act_delta}).

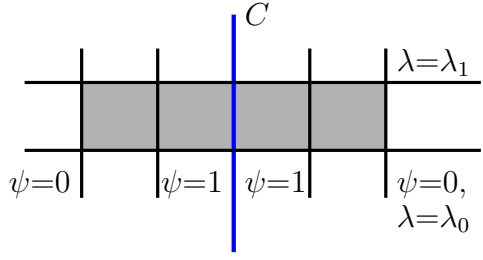
\begin{figure}
\centering
\begin{tikzpicture}[yscale=0.9]
\fill [black!30!white] (-2,1) rectangle +(4,1) ;
\draw [ultra thick,blue] (0,-0.5) -- +(0,3.5) ;
\draw  (0,3) node[right] {$C$} ;
\draw [very thick] (-1,0.3) -- +(0,2.2) ;
\draw [very thick] (-2,0.3) -- +(0,2.2) ;
\draw [very thick] (1,0.3) -- +(0,2.2) ;
\draw [very thick] (2,0.3) -- +(0,2.2) ;
\draw [very thick] (-2.75,1) -- +(6,0) ;
\draw [very thick] (-2.75,2) -- +(6,0) ;
\draw (-2,0.5) node[left] {$\psi{=}0$} 
(0,0.5) node[left] {$\psi{=}1$} 
(0,0.5) node[right] {$\psi{=}1$} 
(2,0.5) node[right] {$\psi{=}0,$} 
; 
\draw (2,2.3) node[right] {$\lambda{=}\lambda_1$} 
 (2,0.0) node[right] {$\lambda{=}\lambda_0$} ; 
\end{tikzpicture}
\caption{Construction of the test form
  $\psi\,d\lambda\in\Gamma_0\Lambda^1 M$. Its
  support is the shaded region.}
\label{fig_supp_mu_d_lambda}
\end{figure}

Higher order multipoles may be constructed by acting on $\Jmp$ with the
operations given in (\ref{Defs_opps_on_Psi}), and then ensuring that
the resulting distribution in closed.
Unlike the dipole/quadrupole relations where they mix, the monopole
can be separated off. That is all multipoles may be written
\begin{align}
\J_{\text{total}}=\Jmp+\J_{\textup{Monopole Free}}
\label{Defs_split_J}
\end{align}
for some value of the charge $q$. We say that $\J$ is \defn{monopole
  free} if for any 
\begin{align}
\J[\psi\,d\lambda]=0
\label{Defs_mono_free}
\end{align}
for all scalar fields $\lambda,\psi$ such that $\psi$ is flat in a
neighbourhood of $C$, $C^\star(\psi)=1$ and the combination
$\psi\,d\lambda$ has compact support.  See figure
\ref{fig_supp_mu_d_lambda}.  It is trivial to see that
\begin{align}
\Jmp[\psi\,d\lambda]
=
q\int_\DomC d\, C^\star(\lambda)
=
q (\lambda_1-\lambda_0)
\label{Defs_mono_q}
\end{align}
where $\displaystyle\lambda_1=\lim_{\tau\to\sup(\DomC)} \lambda(\tau)$ and 
$\displaystyle\lambda_0=\lim_{\tau\to\inf(\DomC)} \lambda(\tau)$.
We show in the appendix that $\J[\psi\,d\lambda]$ is independent of the
choice of $\lambda,\psi$ 
and hence (\ref{Defs_mono_q}) can be used to evaluate the charge
associated with a multipole.

The \defn{order of a multipole} is defined as follows. If
\begin{equation}
\begin{aligned}
\J[\lambda^{k+1} \phi]=0
\quadtext{for all}
&\lambda\in\Gamma\Lambda^0M\text{ and }
\phi\in\Gamma_0\Lambda^1M
\\&\quadtext{such that}
C^\star(\lambda)=0
\end{aligned}
\label{Defs_order}
\end{equation}
then we say that the order of $\J$ is at most $k$. Since we impose that
$\lambda$ vanishes on the image of $C$, this implies that we need to
differentiate the argument $\lambda^{k+1} \phi$ at least $k+1$ times
for $\J[\lambda^{k+1} \phi]\ne0$. We say dipoles have order at most one
and quadrupoles have order at most two. Therefore the terms in a
dipole have at most one derivative, and those in a quadrupole at most
two. This is consistent with the fact that the set of quadrupoles include all
dipoles.

As stated the electric multipoles can be defined in a metric-free and
coordinate-free manner, which contrast with the magnetic multipoles.
We say that $\J$ is an \defn{electric} multipole of order at most $\ell$  if
\begin{equation}
\begin{aligned}
\J[\lambda^{\ell} d\mu]=0
\quadtext{for all}&
\lambda,\mu\in\Gamma\Lambda^0M\quadtext{such that}
\\&
C^\star(\lambda)=C^\star(\mu)=0
\end{aligned}
\label{Defs_Elec_order}
\end{equation}
Clearly if $\J$ satisfies (\ref{Defs_order}) at order $k$ then it
satisfies (\ref{Defs_Elec_order}) at order $\ell=k+1$. Hence all dipoles are
electric quadrupoles. In the appendix we show that if $\J$ satisfies
(\ref{Defs_Elec_order}) at order $\ell$ then it also satisfies
(\ref{Defs_order}) at order $k=\ell$. Thus all electric quadrupoles are
quadrupoles.

For a dipole at rest, not satisfying (\ref{Defs_Elec_order}) with $\ell=1$,
i.e. $\Jdp[\lambda d\mu]\ne0$ for some $\lambda,\mu$ with
$C^\star(\lambda)=C^\star(\mu)=0$ then $\Jdp$ contains
magnetic dipole components. Likewise if a quadrupole $\Jqp$ does not
satisfy (\ref{Defs_Elec_order}) with $\ell=2$, we say it has magnetic components.

Using (\ref{Defs_mono_q}),(\ref{Defs_order}),(\ref{Defs_Elec_order}) we
can now define the multipoles we are interested in:
\begin{jgitemize}
\item
A monopole, $\Jmp$ is a zero order 3-form distribution over $C$.

\item
A dipole, $\Jdp$, is a closed, monopole free, 
first order 3-form distribution over $C$. This is equivalent to both
(\ref{Intro_Jdp_a_delta}) and (\ref{Coord_Jdp_act_delta}).
\item
An electric dipole, $\Jed$, is a dipole satisfying
  (\ref{Defs_Elec_order}) with $\ell=1$. This is equivalent to
(\ref{Intro_Jdp_a_delta}),(\ref{Coord_Jdp_act_delta}) together
with (\ref{Coord_DP_ED}).
\item
A quadrupole, $\Jqp$, is a closed, monopole free, 
second order 3-form distribution over $C$. This is equivalent to both
(\ref{Intro_Jqp_a_delta}) and (\ref{Coord_Jqp_act_delta}).
\item
An electric quadrupole, $\Jeq$, is a quadrupole satisfying
  (\ref{Defs_Elec_order}) with $\ell=2$. This is equivalent to
(\ref{Intro_Jqp_a_delta}),(\ref{Coord_Jqp_act_delta}) together
with (\ref{Coord_EQP}).
\end{jgitemize}
These equivalences are all demonstrated in the appendix.

\section{Conclusion and Discussion}
\label{ch_Conc}

In this article we have calculated the coordinate transformations
associated with quadrupoles and the their unusual
property, namely second order derivative and
integration. There is always a tension as to the pro and cons of the
using coordinate-free approaches. However given the complicated
coordinate transformation given here, it is the opinions of the
authors that the coordinate-free definition of quadrupoles is clearly
justified. We have shown that electric multipoles are more
``fundamental'' than magnetic multipoles since they can be defined
without a metric or preferred coordinate system.

This work raises many interesting questions and
directions one may pursue:
\begin{itemize}
\item
As stated, using the metric one may define a pure magnetic dipole. 
I.e. a dipole with no electric dipole
terms. However it is unknown to what extend one can define a
magnetic quadrupole which does not contain any electric terms.  Also
unknown is whether one can define an electric quadrupole which does
not contain magnetic dipole terms.
By contrast if one prescribes a laboratory coordinate system then one
can define all the objects: electric dipole (dim=3), magnetic dipole (dim=3), dipole free electric quadrupole (dim=6) and dipole free magnetic
quadrupole (dim=8). As stated these will mix with respect to other
coordinate systems. 
It is natural to extend this analysis to higher order multipoles. Raab and Lange
\cite{raab2005transformed} list the 77 electric terms up to octopole.

\item
It should be possible to extend this analysis to look at quadrupole
sources for linearised gravity. This is important as a source for
gravitational waves. In contrast to the closed $3-$form for
electromagnetic currents, the quadrupole in linearised gravity is a
stress-energy-momentum tensor. 

\item
As mentioned above, the results presented here can be extended not only
to higher dimensions but also to one and two dimensional sources,
i.e. which trace out world-sheets and three dimensional timelike
manifolds. One can even construct an event multipole, which has
support in just one event in spacetime.
One application of multipoles on higher dimensional manifolds is in
accelerator physics where the high energy bunch of electrons can be
expressed as a multipole expansion in seven dimension (phase space
$+$ time).

\item
There is a longstanding debate in the literature about the correct
equation of motion for a point charge that includes the back reaction,
with most authors favouring the Abraham-Lorentz-Dirac equation
\cite{poisson1999introduction,rohrlich2007classical,ferris2011origin}. This despite its well
documented pathologies. The problem for dipoles is more challenging as
one would have to renormalise a force which goes as $\sim r^{-3}$ as
one approaches the dipole.  Should that challenge be achieved and the
radiation reaction for quadrupoles be desired then the prescription
given here for the equations of motion will be needed.  Alternatively
a higher order theory of electromagnetism could be considered such as
the Bopp-Podolski theory \cite{gratus2015self}. In this theory the
distributional sources for the moving multipoles would still be valid.  
Hence the electromagnetic fields due to a dipole would, we conjecture, grow as $\sim r^{-2}$ as
$r\to0$ and hence may be renormalisable. Consequently the quadrupole
fields would grow as $\sim r^{-3}$.

\item
The fact that the transformation rules for quadrupoles involve an
integral poses the question about the bundle structure of
quadrupoles. Since $\gamma^{abc}$ is a function of $\tau$ one would
look for a vector bundle over $\DomC$, whose sections are in one to one
correspondence with the set of quadrupoles. In future work
\cite{gratus2017DeRham} we show
that such a vector bundle exists but is not unique and depends on a
choice of {\it thickening}, that is a domain $U\subset M$ and a map
$\Pi:U\to\DomC$ such that the combination $\Pi(C(\tau))=\tau$.

\end{itemize}

\appendix

\section{Statement and proof of the results in section \ref{ch_CoFree}}

\begin{lemma}
The classification of dipoles and quadrupoles: 

All dipoles $\Jdp$, i.e.
closed, monopole free,  first order $3-$form distribution
over $C$ are given by \eqref{Coord_Jdp_act_delta}.

All quadrupoles $\Jqp$, i.e.
closed, monopole free,  second order $3-$form distribution
over $C$ are given by \eqref{Coord_Jqp_act_delta}.
\end{lemma}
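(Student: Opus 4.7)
The plan is to proceed in three stages: invoke a structure theorem for distributions of finite order supported on the image of the worldline $C$, impose closedness to cut the coefficients down to the symmetry classes \eqref{Intro_DP_gamma_sym} and \eqref{Intro_QP_gamma_sym}, and finally use monopole-freeness to strip away any residual lower-order piece corresponding to a monopole charge. Specifically, I would use the Schwartz-type structure theorem that any $3$-form distribution of order at most $k$ supported on an embedded one-manifold admits, in an adapted chart with $x^0=\tau$ along $C$ and $x^\mu=0$ on $C$, a canonical expansion
\begin{equation*}
\J[\phi] = \int_{\DomC}\!\Big[\alpha^a(\tau)\,\phi_a + \gamma^{ab}(\tau)\,\partial_b\phi_a + \tfrac12\gamma^{abc}(\tau)\,\partial_b\partial_c\phi_a + \cdots\Big]_{C(\tau)}\, d\tau
\end{equation*}
with smooth coefficients symmetrized in their derivative indices. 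For a dipole the series truncates at first order, for a quadrupole at second order. Plugging $\phi=d\lambda$ into $\J[d\lambda]=0$, and observing that by a bump-function construction I can prescribe the Taylor jet of $\lambda$ along $C$ arbitrarily as a function of $\tau$, I can read off the constraints term-by-term in the jet. The top order forces the symmetric part of the leading coefficient to vanish: $\gamma^{(ab)}=0$ for dipoles, giving \eqref{Intro_DP_gamma_sym}, and the cyclic identity $\gamma^{abc}+\gamma^{bca}+\gamma^{cab}=0$ of \eqref{Intro_QP_gamma_sym} for quadrupoles. The lower-jet pieces of closedness then express $\alpha^a$ and, for quadrupoles, the subleading $\gamma^{ab}$, as total $\tau$-derivatives of the leading coefficient along $C$.

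Next I would apply monopole-freeness. With $\psi$ equal to one in a neighbourhood of $C$ and $\psi\,d\lambda$ compactly supported, the vanishing of $\partial_b\psi|_C$ reduces $\J[\psi\,d\lambda]$ to an integral of exactly the same form as $\J[d\lambda]$, except that $\lambda$ is now permitted non-zero boundary values on $\DomC$. Integrating by parts in $\tau$ throws each total-derivative piece onto the boundary, producing a term linear in $\lambda|_{\partial\DomC}$ --- precisely the monopole contribution $q(\lambda_1-\lambda_0)$ of \eqref{Defs_mono_q}. Requiring this to vanish for all admissible $\lambda$ forces the coefficients of those boundary terms to be zero, eliminating the residual $\alpha^a$ for dipoles, and both the residual $\alpha^a$ and the residual dipole-type part of $\gamma^{ab}$ for quadrupoles. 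What survives is exactly \eqref{Coord_Jdp_act_delta} and \eqref{Coord_Jqp_act_delta} with the stated symmetry constraints. The converse, that these representations are closed, monopole-free, and of the stated order, is a short direct verification in the style of the calculation following \eqref{Coord_DP_change_coords}.

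The step I expect to be the main obstacle is the precise separation between what closedness imposes and what monopole-freeness additionally imposes: the two conditions overlap heavily, and the distinction is carried entirely by admitting test functions $\lambda$ that are not compactly supported. Handling this correctly requires keeping careful track of boundary terms in every integration by parts, and making sure that the ambiguity in representing the remaining top-order coefficient as a pure $\tau$-derivative of a skew or cyclically-constrained tensor does not spuriously reintroduce monopole content. A secondary technicality is confirming that the local coordinate expansion patches into a globally defined object along $C$; once the classification is established in a single adapted chart, this is a routine consequence of the coordinate transformation rules derived in Section \ref{ch_Coord}.
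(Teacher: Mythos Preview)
Your overall strategy matches the paper's: pass to an adapted chart, expand as a finite sum of derivatives acting on $\phi_a$, impose closedness, then remove the monopole. But there is a genuine gap in the expansion you write down and in the inference you draw from it.

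In your adapted chart with $x^0=\tau$ along $C$, the derivative $\partial_0$ restricted to $C$ is $d/d\tau$. Hence any term $\gamma^{a\,b_1\cdots b_r 0}\,\partial_0\partial_{b_1}\cdots\partial_{b_r}\phi_a$ can be integrated by parts in $\tau$ and absorbed into lower-order coefficients. Your ``canonical expansion'' with derivative indices running over $0,1,2,3$ is therefore \emph{not} canonical: two different coefficient sets can represent the same distribution. The correct Schwartz structure theorem for a distribution supported on an embedded curve yields an expansion in \emph{transverse} derivatives $\partial_\mu$, $\mu=1,2,3$, only; this is exactly what the paper does, writing $\Jdp[\phi]=\int(\zeta^{\emptyset,a}\phi_a+\zeta^{\mu,a}\partial_\mu\phi_a)\,d\tau$ and similarly for $\Jqp$.

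This non-uniqueness breaks your key step. You claim that by prescribing the Taylor jet of $\lambda$ along $C$ arbitrarily you can read off $\gamma^{(ab)}=0$ from closedness. But the jet components are not all independent: $\partial_0^k\partial_{\mu_1}\cdots\partial_{\mu_r}\lambda\big|_C = \tfrac{d^k}{d\tau^k}\big(\partial_{\mu_1}\cdots\partial_{\mu_r}\lambda\big|_C\big)$, so only the transverse jet is free data. Closedness then forces only $\gamma^{(\mu\nu)}=0$; the components $\gamma^{(0a)}$ survive and mix with $\alpha^a$ through the integration-by-parts ambiguity. You can repair this either by first gauging away all $\partial_0$-terms (reducing to the paper's expansion) or by carrying the redundancy through explicitly, but as written the deduction of the antisymmetry \eqref{Intro_DP_gamma_sym} and the cyclic identity \eqref{Intro_QP_gamma_sym} is incomplete. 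Once you restrict to transverse derivatives, the parameter count and the identification of the monopole piece go through exactly as in the paper; your treatment of monopole-freeness via boundary terms is then correct but more elaborate than needed, since in the transverse expansion the monopole is simply the single constant coefficient $\zeta^{\emptyset,0}$.
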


\begin{proof}

Let $\Jdp$ be given by (\ref{Coord_Jdp_act_delta}) and
$\lambda\in\Gamma\Lambda^0M$ with $C^\star(\lambda)=0$.  Since there
is only one derivative of $\lambda^2\phi_a$ then
$\Jdp[\lambda^2\phi]=0$ hence $\Jdp$ has degree at most one. Likewise
$\Jqp$ given by (\ref{Coord_Jqp_act_delta}) has degree at most two.
By requiring (\ref{Defs_J_closed}), i.e. 
$\Jdp[d\lambda]=0$ and $\Jqp[d\lambda]=0$ implies the symmetry
conditions on $\gamma^{ab}$ and $\gamma^{abc}$ respectively.

In order to show that a general dipole or quadrupole distribution with 
order $k$ according to
\eqref{Defs_order} can be written as (\ref{Coord_Jdp_act_delta}) or
(\ref{Coord_Jqp_act_delta})  we need to consider an adapted coordinate
system. Let $(z^0,z^1,z^2,z^3)$ be a coordinate system adapted to the
embedding $C$, so that $C(\tau)=(\tau,0,0,0)$.

By the manipulations in (\ref{Defs_opps_on_Psi}) we can see that the
general multipole can be constructed from just addition, internal
contraction, Lie derivatives and wedge products. Thus 
\begin{align*}
\Psi[\phi]
=&
\sum_{\textup{terms like}}\int_\DomC C^\star(L_{v_1}\cdots L_{v_r} \phi)
+
\sum_{\textup{terms like}} \int_\DomC C^\star(L_{v_1}\cdots L_{v_r} i_w\phi)\, d\tau
\end{align*}
For the dipole case then (\ref{Defs_order}) implies 
$\Jdp[(z^\mu)^2\phi]=0$ and
$\Jdp[(z^\mu+z^\nu)^2\phi]=0$ for $\mu=1,2,3$ and hence
$\Jdp[z^\mu z^\nu\phi]=0$. Thus there cannot be any terms with
$\displaystyle\pqfrac{\phi_a}{z^\mu}{z^\nu}\rule{0em}{2em}$. Thus the general degree
three distribution of order at most one is given by
\begin{align*}
\Jdp[\phi]
&=
\int_\DomC\Big(
{\Jcurr^{\emptyset,0}}\,\phi_0 
+
\sum_\mu {\Jcurr^{\mu,0}}\,\partial_\mu \phi_0
+
\sum_\nu {\Jcurr^{\emptyset,\nu}\,\phi_\nu}
+
\sum_{\mu,\nu} {\Jcurr^{\mu,\nu}} \partial_\mu\phi_\nu
\Big)d\tau
\end{align*}
where $\displaystyle\partial_\mu={\pfrac{}{z^\mu}}$. 
In this section (indexed) scalar fields (such as $\partial_c\phi_0$) are
implicitly evaluated on $C(\tau)$.
Set $d\Jdp=0$ and hence $\Jdp[d\lambda]=0$
for all $\lambda\in\Gamma_0\Lambda^0M$. This gives the following equations
\begin{align}
& 
{\dot\Jcurr^{\emptyset,0}}=0
\,,\qquad
{\dot\Jcurr^{\mu,0}} -
{\Jcurr^{\emptyset,\mu}} = 0
\qquadand
{\Jcurr^{\mu,\nu}} - {\Jcurr^{\nu,\mu}} = 0
\label{Coords_DP_cond}
\end{align}
where 
${\dot\Jcurr^{\mu,0}}=\displaystyle\dfrac{\Jcurr^{\mu,0}}{\tau}$.
The $\Jcurr^{\emptyset,0}=q$ gives the monopole
term. The remaining terms are then given by
\begin{equation}
\begin{aligned}
\Jdp[\phi] 
&= 
\int_\DomC\bigg(
\sum_{\mu} \Big( {\Jcurr^{\mu,0}}
\,\partial_\mu \phi_0
-
\dot\Jcurr^{\mu,0}\,\phi_\mu
\Big)
+
\sum_{\mu<\nu}
{\Jcurr^{\mu,\nu}}\, \big( \partial_\mu \phi_\nu - \partial_\nu \phi_\mu\big)
\bigg)\,d\tau
\\&= 
\int_\DomC\bigg(
\sum_{\mu} {\Jcurr^{\mu,0}}
\Big( \partial_\mu \phi_0
-
\partial_0\phi_\mu
\Big)
+
\sum_{\mu<\nu}
{\Jcurr^{\mu,\nu}}\, \big( \partial_\mu \phi_\nu - \partial_\nu \phi_\mu\big)
\bigg)\,d\tau
\end{aligned}
\label{Coords_DP_Res}
\end{equation}
Thus there are six free parameters. These correspond to the
$\gamma^{ab}$ for the adapted coordinates via 
$\gamma^{0\mu}=-{\Jcurr^{\mu,0}}$ and 
$\gamma^{\nu\mu}=-{\Jcurr^{\mu,\nu}}$. Since there are six free
parameters which is the same number in (\ref{Coord_Jdp_act_delta}), then
(\ref{Coord_Jdp_act_delta}) covers all the dipoles.


For quadrupoles the general degree
three distribution of order at most two is given by
\begin{align*}
\Jqp[\phi]
=
\int_\DomC\bigg(
{\Jcurr^{\emptyset,0}}\,\phi_0
+
\sum_{\mu} {\Jcurr^{\mu,0}}\,\partial_\mu \phi_0
+
\sum_{\nu} {\Jcurr^{\emptyset,\nu}\,\phi_\nu}
+
\sum_{\mu,\nu} {\Jcurr^{\mu,\nu}}\, \partial_\mu\phi_\nu
&+
\sum_{\mu\le\nu} 
\Jcurr^{\mu\nu,0}\,\partial_\mu \partial_\nu \phi_0
\\&
+
\sum_{\mu\le\nu}\sum_{\rho} \Jcurr^{\mu\nu,\rho}\,
\partial_\mu \partial_\nu\phi_\rho
\bigg) \,d\tau
\end{align*}
Again setting $d\Jqp=0$ implies
\begin{align*}
\begin{gathered}
{\Jcurr^{\emptyset,0}}=q
\,,\quad
{\dot\Jcurr^{\mu,0}} - 
{\Jcurr^{\emptyset,\mu}} = 0
\,,\quad
{\Jcurr^{\mu,\mu}}
-  
{\dot\Jcurr^{\mu\mu,0}}
= 0
\,,\quad
{\Jcurr^{\mu,\nu}}
+  {\Jcurr^{\nu,\mu}}
-  
{\dot\Jcurr^{\mu\nu,0}}
= 0\,,
\\
{\Jcurr^{\mu\mu,\mu}}=0
\,,\quad
{\Jcurr^{\mu\mu,\rho}}+
{\Jcurr^{\mu\rho,\mu}}=0
\quadand
{\Jcurr^{12,3}}+
{\Jcurr^{13,2}}+
{\Jcurr^{23,1}}=0
\end{gathered}
\end{align*}
for $\mu<\nu$ and $\mu\ne\rho$.
This gives
\begin{equation}
\begin{aligned}
\Jqp[\phi]
&=
\int_\DomC\bigg(
\sum_{\mu} {\Jcurr^{\mu,0}}
\Big( \partial_\mu \phi_0
-
\partial_0\phi_\mu
\Big)
+
\tfrac12\sum_{\mu<\nu}
\big(\Jcurr^{\mu,\nu}-\Jcurr^{\nu,\mu}\big)
\, \big( \partial_\mu \phi_\nu - \partial_\nu \phi_\mu\big)
\\&\qquad
+
\sum_{\mu\le\nu} 
\Jcurr^{\mu\nu,0}\,\big(\partial_\mu \partial_\nu \phi_0
-
\tfrac12
\partial_0\partial_\nu \phi_\mu -
\tfrac12\partial_0\partial_\mu \phi_\nu\big)
+
\sum_{\mu\ne\nu}
{\Jcurr^{\mu\mu,\nu}}\,
(\partial_\mu \partial_\mu \phi_\nu-\partial_\nu \partial_\mu \phi_\mu)
\\&\qquad
+
{\Jcurr^{12,3}}\,
(\partial_1 \partial_2 \phi_3-\partial_2 \partial_3 \phi_1)
+{\Jcurr^{13,2}}\,
(\partial_1 \partial_3 \phi_2-\partial_2 \partial_3 \phi_1)
\bigg)
d\tau
\end{aligned}
\label{QP_adapt_coords}
\end{equation}
Again the twenty, non monopole, 
parameters $\Jcurr^{ab,c}$ correspond to the twenty
free parameters $\gamma^{abc}$. These are given by
\begin{align}
\begin{gathered}
\gamma^{\mu\nu\rho} = {\Jcurr^{\nu\rho,\mu}}
\,,\quad
\gamma^{\nu\mu\mu} = {\Jcurr^{\mu\mu,\nu}}
\,,\quad
\gamma^{0\mu\mu} = {\Jcurr^{\mu\mu,0}}
\,,
\quad
\\\gamma^{0\mu\nu} = {\Jcurr^{\mu\nu,0}}
\,,\quad
\dot\gamma^{\mu0\nu} = {\Jcurr^{\nu,\mu}}
\,,\quad
\dot\gamma^{\mu00} =
{\Jcurr^{\mu,0}}
\end{gathered}
\label{SMD_Eg_QP_convert_J_gam}
\end{align}
for $\mu\ne\nu=1,2,3$. Recall (\ref{Intro_QP_gamma_sym}) implies
$\gamma^{abb}=-2\gamma^{bab}=-2\gamma^{bba}$. Observe that some of the
components $\gamma^{abc}$ are differentiated.
\end{proof}

\begin{lemma}
The monopole term given by \eqref{Defs_mono_q} is
      independent of $\lambda,\mu$.
\end{lemma}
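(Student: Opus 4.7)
The plan is to show that for any multipole $\J$ built from $\Jmp$ via the operations in (\ref{Defs_opps_on_Psi}), the value of $\J[\psi\,d\lambda]$ depends on the admissible pair $(\psi,\lambda)$ only through the boundary limits $\lambda_0,\lambda_1$, so that (\ref{Defs_mono_q}) unambiguously extracts the monopole charge $q$. The argument splits into two independence steps, one isolating $\psi$ and one isolating $\lambda$, relying respectively on the fact that $\J$ is supported on the image of $C$ and that $\J$ is closed.

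First I would record the auxiliary fact that every such $\J$ is supported on the image of $C$: $\J[\phi]=0$ whenever the compactly supported test form $\phi$ vanishes on a neighbourhood of $C$. This holds for $\Jmp$ by (\ref{Defs_Jmp}), since $C^\star\phi=0$ under that hypothesis, and it propagates by induction through the operations in (\ref{Defs_opps_on_Psi}) because each of $i_v$, $L_v$, and the wedge product with a smooth form preserves the subspace of test forms vanishing near $C$. Independence of $\psi$ is then immediate: if $\psi_1,\psi_2$ are two valid cut-offs, both equal $1$ on a neighbourhood of $C$ (being flat there with $C^\star\psi_i=1$), so $(\psi_1-\psi_2)\,d\lambda$ is a compactly supported test $1$-form vanishing near $C$, and hence $\J[(\psi_1-\psi_2)\,d\lambda]=0$.

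For independence of $\lambda$, let $\lambda,\lambda'$ be two valid choices with matching limits and set $\mu=\lambda-\lambda'$, so that $C^\star\mu\to 0$ at both endpoints of $\DomC$. The Leibniz identity
\[
\psi\,d\mu \;=\; d(\psi\mu)\;-\;\mu\,d\psi
\]
decomposes $\J[\psi\,d\mu]$ into two pieces: $\J[\mu\,d\psi]=0$ by the support property, since $d\psi$ vanishes on a neighbourhood of $C$; and $\J[d(\psi\mu)]=0$ by the closedness condition (\ref{Defs_J_closed}), provided $\psi\mu$ is a compactly supported test $0$-form. Thus $\J[\psi\,d\lambda]$ depends on $\lambda$ only through $\lambda_0,\lambda_1$, and combining with the explicit evaluation $\Jmp[\psi\,d\lambda]=q(\lambda_1-\lambda_0)$ already displayed in (\ref{Defs_mono_q}) gives the lemma.

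The main technical obstacle is verifying the compact-support hypothesis on $\psi\mu$. Compact support of $\psi\,d\lambda$ and $\psi\,d\lambda'$ forces both $\lambda$ and $\lambda'$ to become constant on the inner tube where $\psi\equiv 1$ outside some bounded longitudinal range, and the matching boundary limits pin those constants together so that $\mu\equiv 0$ on that tube far from a bounded region; however $\mu$ need not vanish in the annular transition region $\{0<\psi<1\}$, where $\psi\mu$ may fail to be compactly supported along $C$. Exploiting the freedom in $\psi$ already established, I would replace $\psi$ by a modified admissible cut-off obtained by multiplying it with a bump function in the worldline parameter $\tau$ chosen large enough to cover the effective longitudinal supports of $d\lambda$ and $d\lambda'$; the new cut-off is compactly supported in $M$, so $\psi\mu$ automatically is too, and the Leibniz argument goes through.
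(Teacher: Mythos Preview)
Your approach is genuinely different from the paper's and in some ways more conceptual. The paper does not use support and closedness abstractly; instead it invokes the classification lemma just proved, substitutes $\phi=\psi\,d\lambda$ directly into the explicit adapted-coordinate expressions (\ref{Coords_DP_Res}) and (\ref{QP_adapt_coords}), and observes that each non-monopole term vanishes because the flatness of $\psi$ near $C$ reduces it to a symmetric combination of mixed partials of $\lambda$ that cancels. Your route avoids the classification lemma entirely and would work for multipoles of any order, whereas the paper's computation is tied to orders one and two. The trade-off is that the paper's argument is a two-line check with no delicate support bookkeeping.

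There is, however, a real gap in your final paragraph. The cut-off $\tilde\psi=\psi\chi$ you build by multiplying with a longitudinal bump $\chi$ is \emph{not} admissible: $C^\star\tilde\psi=\chi|_C$ is not identically $1$, so you cannot invoke the independence-of-$\psi$ step to pass from $\psi$ to $\tilde\psi$. You can still get $\J[\tilde\psi\,d\lambda]=\J[\psi\,d\lambda]$ for the trivial reason that the two test forms agree when $\chi\equiv 1$ on the longitudinal support of $\psi\,d\lambda$, so that part is fine. But the Leibniz step now reads $\J[\tilde\psi\,d\mu]=\J[d(\tilde\psi\mu)]-\J[\mu\,d\tilde\psi]$, and the second term is no longer killed by your earlier reasoning: $d\tilde\psi=\chi\,d\psi+\psi\,d\chi$, and the piece $\psi\,d\chi$ does \emph{not} vanish near $C$. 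To close the argument you must additionally observe that on the inner tube where $\psi\equiv 1$ and $d\chi\neq 0$, both $d\lambda$ and $d\lambda'$ vanish (you are outside their longitudinal supports), so $\lambda,\lambda'$ are constant there and equal to their matching boundary values, whence $\mu=0$ near $C$ on that set and $\mu\,\psi\,d\chi$ vanishes near $C$ after all. With that extra sentence the argument is complete; without it the Leibniz step does not go through for the modified cut-off.
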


\begin{proof} Substituting $\phi=\psi\,d\lambda$
  into (\ref{Coords_DP_Res}) and (\ref{QP_adapt_coords}).  
We can use for example
\begin{align*}
\int_\DomC&
\sum_{\mu<\nu}
{\Jcurr^{\mu,\nu}}\, \big( \partial_\mu (\psi\partial_\nu\lambda) 
- \partial_\nu (\psi\partial_\mu\lambda) \big)
\,d\tau
=
\int_\DomC
\sum_{\mu<\nu}
{\Jcurr^{\mu,\nu}}\, \big( \partial_\mu \partial_\nu\lambda
- \partial_\nu \partial_\mu\lambda \big)
\,d\tau
=0
\end{align*}
Similarly with the other terms. Thus the only non zero term is the
monopole term.
\end{proof}

\begin{lemma}
For electric dipoles the following definitions are equivalent
\begin{jgitemize}
\item Equation \eqref{Defs_Elec_order} with $\ell=1$.
\item Equation \eqref{Coord_DP_ED} in arbitrary coordinates.
\item In adapted coordinates with $\gamma^{\mu\nu}=0$ for all
  $\mu,\nu=1,2,3$. 
\end{jgitemize}
For electric quadrupoles the following definitions are equivalent
\begin{jgitemize}
\item Equation \eqref{Defs_Elec_order} with $\ell=2$.
\item Equation \eqref{Coord_EQP} in arbitrary coordinates.
\item In adapted coordinates with $\gamma^{\mu\nu\rho}=0$ for all
  $\mu,\nu=1,2,3$. 
\end{jgitemize}
\end{lemma}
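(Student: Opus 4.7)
The plan is to prove each three-way equivalence by three steps: (ii)$\Rightarrow$(i) and (i)$\Rightarrow$(ii) directly in arbitrary coordinates, and (ii)$\Leftrightarrow$(iii) by specialising to adapted coordinates $(z^0,\ldots,z^3)$ with $C(\tau)=(\tau,0,0,0)$ so that $\Cdot^a=\delta^a_0$. Here (i) is equation (\ref{Defs_Elec_order}), (ii) is the arbitrary-coordinate form (\ref{Coord_DP_ED}) or (\ref{Coord_EQP}), and (iii) is the adapted-coordinate vanishing condition. The common tool is the identity $\Cdot^a\partial_a\lambda|_{C(\tau)}=\tfrac{d}{d\tau}(\lambda\circ C)=0$, valid whenever $C^\star(\lambda)=0$. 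I sketch the quadrupole case in detail; the dipole case is the same argument with one fewer derivative and tensor index.

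For (ii)$\Rightarrow$(i), substitute $\phi_a=\lambda^2\partial_a\mu$ into (\ref{Coord_Jqp_act_delta}). With $\lambda|_C=0$, Leibniz collapses the second derivative to $\partial_b\partial_c(\lambda^2\partial_a\mu)|_C=2(\partial_b\lambda)(\partial_c\lambda)(\partial_a\mu)|_C$, so that $\Jeq[\lambda^2\,d\mu]=\int\gamma^{abc}(\partial_b\lambda)(\partial_c\lambda)(\partial_a\mu)|_C\,d\tau$. Each of the four terms in $\gamma^{abc}$ as given by (\ref{Coord_EQP}) carries a $\Cdot$ in one of the slots $a$, $b$ or $c$; that slot is contracted against $\partial_a\mu|_C$, $\partial_b\lambda|_C$ or $\partial_c\lambda|_C$ and hence vanishes by the key identity. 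Thus $\Jeq[\lambda^2\,d\mu]=0$.

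For (i)$\Rightarrow$(ii), the same expansion combined with (i) gives $\int\gamma^{abc}(\partial_b\lambda)(\partial_c\lambda)(\partial_a\mu)|_C\,d\tau=0$ for all admissible $\lambda,\mu$. Cutting off polynomials in $x-C(\tau)$ by $\tau$-bump functions realises $(\partial_a\mu|_C,\partial_b\lambda|_C)$ as arbitrary smooth covectors annihilating $\Cdot$ at any prescribed instant, whence pointwise $\gamma^{abc}v_bv_cu_a=0$ for all covectors $u,v$ with $\Cdot^au_a=\Cdot^av_a=0$. Pointwise linear algebra on this tensor (symmetric in $bc$) then yields a decomposition $\gamma^{abc}=\Cdot^aA^{bc}+\Cdot^bB^{ac}+\Cdot^cB^{ab}$ with $A^{bc}=A^{cb}$; imposing the cyclic identity (\ref{Intro_QP_gamma_sym}) forces $A^{bc}+B^{bc}+B^{cb}=0$, and setting $q^{bc}:=-B^{bc}$ returns exactly (\ref{Coord_EQP}). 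The dipole analogue is one index shorter: antisymmetry of $\gamma^{ab}$ combined with $\gamma^{ab}v_au_b=0$ on the transverse subspace directly yields $\gamma^{ab}=w^a\Cdot^b-w^b\Cdot^a$.

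For (ii)$\Leftrightarrow$(iii), substitute $\Cdot^a=\delta^a_0$ into (\ref{Coord_EQP}): the purely spatial components collapse immediately to $\gamma^{\mu\nu\rho}=0$ (which is (iii)), while the remaining components read off as $\gamma^{0\nu\rho}=q^{\nu\rho}+q^{\rho\nu}$, $\gamma^{\mu 0\rho}=-q^{\mu\rho}$, and so on. These relations invert uniquely up to the $4$-parameter gauge freedom in $q^{ab}$ stated after (\ref{Coord_EQP}), and the residual twelve parameters match the twelve electric-quadrupole components. The main obstacle is the linear-algebra step in (i)$\Rightarrow$(ii): one must verify that the cyclic identity eliminates exactly the unwanted decompositions, so that the constructed $q^{bc}$ is pinned down up to precisely the stated gauge, neither overcounting nor missing electric-quadrupole components. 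Everything else is routine manipulation of the canonical forms already derived in the first appendix lemma.
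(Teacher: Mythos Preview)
Your proposal is correct and your (ii)$\Rightarrow$(i) step coincides with the paper's: both substitute $\phi=\lambda\,d\mu$ (resp.\ $\lambda^2\,d\mu$) into \eqref{Coord_Jdp_act_delta}/\eqref{Coord_Jqp_act_delta} and kill each term via $\Cdot^a\partial_a\lambda|_C=0$.

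Where you differ is in closing the loop. The paper does \emph{not} prove (i)$\Rightarrow$(ii) directly in arbitrary coordinates. Instead it argues (ii)$\Rightarrow$(i)$\Rightarrow$(iii): in adapted coordinates it plugs the specific test forms $\psi\,z^\rho\,dz^\sigma$ (and their quadratic analogues) into the canonical representation \eqref{Coords_DP_Res}/\eqref{QP_adapt_coords} from the first appendix lemma, reads off $\Jcurr^{\rho,\sigma}=0$ (resp.\ $\Jcurr^{\mu\mu,\nu}=\Jcurr^{12,3}=\Jcurr^{13,2}=0$), and then finishes with a parameter count: three surviving dipole parameters match the three in \eqref{Coord_DP_ED}, twelve surviving quadrupole parameters match the twelve in \eqref{Coord_EQP}, so (iii)$\Leftrightarrow$(ii). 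Your route instead localises to get the pointwise condition $\gamma^{abc}u_av_bv_c=0$ on the annihilator of $\Cdot$ and then does an intrinsic tensor decomposition, using the cyclic identity \eqref{Intro_QP_gamma_sym} to pin down $q^{bc}$. This is more conceptual and avoids leaning on the explicit adapted-coordinate classification, at the price of the linear-algebra bookkeeping you flag (checking that the cyclic identity really forces $A^{bc}+B^{bc}+B^{cb}=0$ and that the residual freedom is exactly the gauge $q^{ab}\to q^{ab}+s^a\Cdot^b+s^b\Cdot^a$). The paper's version is shorter because that bookkeeping is already absorbed into the parameter count, but your argument has the virtue of exhibiting \eqref{Coord_EQP} constructively rather than by dimension matching.
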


\begin{proof}
To show that (\ref{Coord_DP_ED}) implies
(\ref{Defs_Elec_order}) with $\ell=1$ we have.
\begin{align*}
\Jed[\lambda\,d\mu]
&=
\int_\DomC \big(w^a \Cdot^b - w^b \Cdot^a\big)\,
\pfrac{}{x^a}\Big(\lambda\,\pfrac{\mu}{x^b}\Big)\,d\tau
\\&=
\int_\DomC \big(w^a \Cdot^b - w^b \Cdot^a\big)\,
\pfrac{\lambda}{x^a}\pfrac{\mu}{x^b}\,d\tau
\\&=
\int_\DomC 
\Big(w^a \dfrac{C^\star(\mu)}{\tau}\pfrac{\lambda}{x^a} 
- w^b \dfrac{C^\star(\lambda)}{\tau}\pfrac{\mu}{x^b}\Big)
=0
\end{align*}
since $C^\star(\mu)=C^\star(\lambda)=0$.

To show that (\ref{Defs_Elec_order}) with $\ell=1$ implies
(\ref{Coord_DP_ED}) then write $\Jdp$ in adapted
coordinates (\ref{Coords_DP_Res}). By acting on $(\psi\, z^\rho\,d
z^\sigma)$ where $\psi$ has compact support, we
have from (\ref{Coords_DP_Res})
\begin{align*}
\Jdp[\psi\,z^\rho\,d z^\sigma]
&=
\int_\DomC\Big(
\sum_{\mu} {\Jcurr^{\mu,0}}
\big( \partial_\mu (\psi\,z^\rho \delta^\sigma_0)
-
\partial_0 (\psi\,z^\rho \delta^\sigma_\mu)
\big)
+
\sum_{\mu<\nu}
{\Jcurr^{\mu,\nu}}\, 
\big( \partial_\mu (\psi\,z^\rho \delta^\sigma_\nu) - 
\partial_\nu (\psi\,z^\rho \delta^\sigma_\mu)\big)
\Big)\,d\tau
\\&=
\int_\DomC
{2\Jcurr^{\rho,\sigma}}\, 
C^\star(\psi)\,d\tau
\end{align*}
Since this is true for all $\psi$ we have
$\Jcurr^{\rho,\sigma}=0$. Thus we are left with three parameters
  $\Jcurr^{\mu,0}$. Thus both
  (\ref{Coord_DP_ED}) and (\ref{Defs_Elec_order}) are defined
  by three parameters and so they are equal.

Since $\gamma^{\mu\nu}=\Jcurr^{\nu,\mu}$ then the definitions in terms
of adapted coordinates follows.

\vspace{1em}

The proof for quadrupoles is the same as the proof of dipoles. First substitute (\ref{Coord_EQP}) into
(\ref{Defs_Elec_order}) with $\ell=2$ to show that
$\Jeq[\lambda^2\,d\mu]=0$. Then impose $\Jqp[\lambda^2\,d\mu]=0$ using
$\Jqp$ in adapted coordinates (\ref{QP_adapt_coords}).
This imply the eight terms $\Jcurr_0^{\mu\mu,\nu}=\Jcurr_0^{12,3}=\Jcurr_0^{13,2}=0$
for $\mu\ne\nu=1,2,3$. The remaining twelve terms give rise to the
electric quadrupoles.
\end{proof}

\begin{lemma}
If $\J$ satisfies \eqref{Defs_Elec_order} at order $\ell$ then $\J$
also satisfies \eqref{Defs_order} at order $k=\ell$.
\end{lemma}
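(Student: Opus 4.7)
The plan is to show that the hypothesis forces the distributional order of $\J$ along $C$ to be at most $\ell$, whence \eqref{Defs_order} at $k=\ell$ follows immediately; notably, closedness of $\J$ is never used. I first upgrade the hypothesis by polarization: substituting $\lambda \to \lambda_1 + s_2\lambda_2 + \cdots + s_\ell\lambda_\ell$ into $\J[\lambda^\ell d\mu]=0$ and reading off the coefficient of $s_2\cdots s_\ell$ gives $\J[\lambda_1\cdots\lambda_\ell\,d\mu]=0$ for any $\lambda_1,\ldots,\lambda_\ell,\mu$ vanishing on $C$; a further polarization $\lambda_1 \to (1+sh)\lambda_1$, with $h$ an arbitrary smooth function, and extraction of the $s^1$ coefficient yields $\J[h\,\lambda_1\cdots\lambda_\ell\,d\mu]=0$. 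Hadamard's lemma then shows that every $f$ vanishing to order $\ell$ on $C$ is locally a finite sum of such products, giving
\begin{equation*}
\J[f\,d\mu]=0 \quad\text{for every } f \text{ vanishing to order } \ell \text{ on } C \text{ and every } \mu \text{ vanishing on } C.
\end{equation*}

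Next, I would work in adapted coordinates $(z^0,z^1,z^2,z^3)$ with $C(\tau)=(\tau,0,0,0)$, decompose test 1-forms as $\phi=\phi_a\,dz^a$, and write $\J[\phi]=\sum_a \J_a[\phi_a]$ for scalar distributions $\J_a$ supported on $C$. Applying the upgraded hypothesis to $\tilde\mu=z^{a'}g(z^0)$ with $a'\in\{1,2,3\}$ (so $\tilde\mu$ vanishes on $C$ and $d\tilde\mu=g\,dz^{a'}+z^{a'}g'(z^0)\,dz^0$) produces
\begin{equation*}
\J_{a'}\!\bigl[f\,g(z^0)\bigr]+\J_0\!\bigl[f\,z^{a'}\,g'(z^0)\bigr]=0
\end{equation*}
for every $f$ vanishing to order $\ell$ on $C$. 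Taking $g\equiv 1$ isolates $\J_{a'}[f]=0$, so each transverse component $\J_{a'}$ has order at most $\ell-1$. Feeding this back with general $g$ kills the first term and leaves $\J_0[f\,z^{a'}\,g'(z^0)]=0$; since Hadamard's lemma writes every smooth function vanishing to order $\ell+1$ on $C$ as a sum of products $f\,z^{a'}\,h$, this shows $\J_0$ has order at most $\ell$.

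Finally, for any $\lambda$ vanishing on $C$ and any test 1-form $\phi$, the functions $\lambda^{\ell+1}\phi_a$ all vanish to order $\ell+1$ on $C$ and are therefore annihilated by the corresponding $\J_a$. Summing over $a$ gives $\J[\lambda^{\ell+1}\phi]=0$, which is \eqref{Defs_order} at $k=\ell$. The main technical obstacle is the asymmetry between the worldline direction $a=0$ and the transverse directions $a'\in\{1,2,3\}$: whenever $\tilde\mu$ vanishes on $C$, the component $(d\tilde\mu)_0=\partial_0\tilde\mu$ automatically vanishes on $C$ too, so the hypothesis secretly supplies one extra order of vanishing along $dz^0$ but only the bare order $\ell$ along $dz^{a'}$. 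The two-step testing strategy above—first $g$ constant to pin down $\J_{a'}$, then $g$ general to extract $\J_0$—is what teases these two contributions apart and yields the correct order bounds for each component.
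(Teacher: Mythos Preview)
Your argument is correct, but it takes a different route from the paper's. After the same polarization step, the paper finishes in essentially three lines via the algebraic identity $\lambda\,d\tau = d(\lambda\tau) - \tau\,d\lambda$ in adapted coordinates: writing $\phi=\phi_0\,d\tau + \phi_\mu\,dz^\mu$,
\[
\J[\lambda^{\ell+1}\phi]
=
\J\bigl[\lambda^{\ell}\phi_0\,d(\lambda\tau)\bigr]
-
\J\bigl[\lambda^{\ell}\phi_0\tau\,d\lambda\bigr]
+
\J\bigl[\lambda^{\ell+1}\phi_\mu\,dz^\mu\bigr],
\]
and each summand is already of the polarized form $\J[\lambda_1\cdots\lambda_\ell\,d\mu]$ with all $\lambda_i$ and $\mu$ vanishing on $C$ (absorb the extra smooth factors $\phi_0$, $\phi_0\tau$, $\lambda\phi_\mu$ into one of the $\lambda_i$). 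No Hadamard lemma, no component splitting, no order bookkeeping. Your approach, by contrast, decomposes $\J$ into scalar components $\J_a$ and bounds the distributional order of each separately; this is longer but it actually proves more, namely that the transverse components $\J_{a'}$ ($a'=1,2,3$) have order at most $\ell-1$, one better than the tangential component $\J_0$.

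One small expository wrinkle in your step for $\J_0$: you deduce $\J_0[f\,z^{a'}\,g'(z^0)]=0$ and then invoke a Hadamard decomposition $F=\sum f\,z^{a'}\,h$ with $h$ arbitrary, whereas your identity only supplies $h=g'(z^0)$. The clean fix is to take $g(z^0)=z^0$ so that $g'=1$, giving $\J_0[f\,z^{a'}]=0$ for every $f$ vanishing to order $\ell$; then the Hadamard coefficient $h$ is simply absorbed into $f$. This is implicit in what you wrote (and your ``further polarization'' step already shows arbitrary smooth factors can be absorbed), but it is worth stating explicitly so the reader does not think general $g$ is doing work it is not.
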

\begin{proof}
Let $\J$ satisfies \eqref{Defs_Elec_order} as order $\ell$. By expanding
out powers of the form
$\J[(\lambda_{i_1}+\lambda_{i_2}+\ldots)^{\ell}d\mu]=0$ we can show
$\J[\lambda_1\lambda_2\cdots\lambda_\ell\,d\mu]=0$ for all
$\lambda_1,\ldots,\lambda_\ell,\mu\in\Gamma_0\Lambda^0M$ such that
$C^\star(\lambda_1)=\cdots=C^\star(\lambda_\ell)=C^\star(\mu)=0$. 

In adapted coordinates $\phi=\phi_0\, d\tau+\phi_\mu\, dz^\mu$. So 
\begin{align*}
\J[\lambda^{\ell+1}\phi]
&=
\J[\lambda^{\ell+1}(\phi_0\, d\tau+\phi_\mu\,  dz^\mu)]
\\&=
\J[\lambda^{\ell}\phi_0\,  d(\lambda\tau)]
-
\J[\lambda^{\ell}\phi_0\tau\,  d\lambda]+
\J[\lambda^{\ell+1}\phi_\mu \, dz^\mu]
\\&=
\J[\lambda^{\ell-1}(\lambda\phi_0)\,  d(\lambda\tau)]
-
\J[\lambda^{\ell-1}(\lambda\phi_0\tau)\,  d\lambda]+
\J[\lambda^{\ell}(\lambda\phi_\mu) \, dz^\mu]
=0
\end{align*}
\end{proof}


\section*{Acknowledgement}

{The authors are grateful for the support provided by STFC (the
Cockcroft Institute ST/G008248/1 and ST/P002056/1), EPSRC (the Alpha-X project
EP/J018171/1  and EP/N028694/1) 
and the Lancaster University Faculty of Science and
Technology studentship program.}

{The authors would like to thank Dr. David Burton and Dr. Paul Kinsler 
both of Lancaster University.
}

\bibliographystyle{unsrt}
\bibliography{Quad}

\end{document}